\definecolor{darkred}{rgb}{0.5,0,0}
\definecolor{darkgreen}{rgb}{0,0.5,0}
\definecolor{darkblue}{rgb}{0,0,0.5}
\newcommand{\ket}[1]{{\left\vert{#1}\right\rangle}}
\newcommand{\qw}[1][-1]{\ar @{-} [0,#1]}
\newcommand{\gate}[1]{*{\xy *+<.6em>{#1};p\save+LU;+RU **\dir{-}\restore\save+RU;+RD **\dir{-}\restore\save+RD;+LD **\dir{-}\restore\POS+LD;+LU **\dir{-}\endxy} \qw}
\newcommand{\meter}{\gate{\xy *!<0em,1.1em>h\cir<1.1em>{ur_dr},!U-<0em,.4em>;p+<.5em,.9em> **h\dir{-} \POS <-.6em,.4em> *{},<.6em,-.4em> *{} \endxy}}
\newcommand{\multigate}[2]{*+<1em,.9em>{\hphantom{#2}} \qw \POS[0,0].[#1,0];p !C *{#2},p \save+LU;+RU **\dir{-}\restore\save+RU;+RD **\dir{-}\restore\save+RD;+LD **\dir{-}\restore\save+LD;+LU **\dir{-}\restore}
\newcommand{\ghost}[1]{*+<1em,.9em>{\hphantom{#1}} \qw}
\newcommand{\gategroup}[6]{\POS"#1,#2"."#3,#2"."#1,#4"."#3,#4"!C*+<#5>\frm{#6}}
\newcommand{\Qcircuit}{\xymatrix @*=<0em>}
\newtheorem{theorem}{Theorem}
\newtheorem{lem}{Lemma}
\newtheorem{prop}{Proposition}
\newtheorem{cor}{Corollary}
\newcommand\e{\varepsilon}
\newcommand\la{\lambda}
\newcommand\al{\alpha}
\newcommand\ka{\kappa}
\newcommand{\sinc}{{\rm sinc}}
\newcommand{\norm}[1]{\| #1\|}
\newcommand{\inner}[2]{\langle #1 | #2 \rangle}
\title{Estimating the ground state energy of the Schr\"odinger equation
for convex potentials}
\author[1]{Anargyros Papageorgiou
  \thanks{Electronic address: \texttt{ap@cs.columbia.edu}}}
\author[1,2]{Iasonas Petras
  \thanks{Electronic address: \texttt{ipetras@cs.columbia.edu}}}
\affil[1]{Department of Computer Science, Columbia University}
\affil[2]{Department of Computer Science, Princeton University}
\date{\today }
\begin{document}

\maketitle

\begin{abstract}
In 2011, the fundamental gap conjecture for Schr\"odinger operators was proven.
This can be used to estimate the ground state energy of the time-independent Schr\"odinger equation with
a convex potential and relative error
$\e$. Classical deterministic algorithms solving this problem have cost exponential in the number of its degrees
of freedom $d$.
We show a quantum algorithm, that is based on a perturbation method, for estimating the ground state
energy with relative error $\e$. 
The cost of the algorithm  is polynomial in $d$ and $\e^{-1}$, while
the number of qubits is polynomial in $d$ and $\log\e^{-1}$.
In addition, we present an algorithm for preparing a quantum
state that overlaps within $1-\delta$, $\delta \in (0,1)$, with the ground state eigenvector of the discretized Hamiltonian. 
This algorithm also approximates the ground state with relative 
error $\e$. 
The cost of the algorithm is polynomial in $d$, $\e^{-1}$ and $\delta^{-1}$,
while the number of qubits is polynomial in $d$, $\log\e^{-1}$ and $\log\delta^{-1}$. 

\noindent \textbf{Keywords}: Eigenvalue problem, numerical approximation, quantum algorithms

\noindent \textbf{MSC2010}: 65N25, 81-08
\end{abstract}

\section{Introduction}

The power of quantum computers 
has been studied extensively. In some cases the results have been very exciting and encouraging while in others the results show
limitations of quantum computation.
Shor's quantum algorithm for 
computing prime factors of a number \cite{Shor97} 
has provided an exponential speed-up compared to the fastest classical
algorithm known, the number field sieve.  
Similarly, Grover's algorithm for searching an unstructured database provides a quadratic speed-up 
compared to the best classical algorithm \cite{Grover97}. 
There are also results demonstrating that certain problems are very hard for quantum computers.
For example, a decision problem about the ground state energy of a local Hamiltonian is QMA-complete \cite{KKR06}.

One of the most important problems in computational sciences is to calculate the properties of physical 
and chemical systems. Such systems are described by the Schr\"odinger equation. 
Typically, this equation imposes significant computational demands in carrying out precise
calculations \cite{Lub}. 

Of particular interest is the estimation of the ground state energy
of the time independent Schr\"odinger equation. In particular,
consider the eigenvalue problem 
\begin{eqnarray}
\left(-\frac{1}{2}\Delta + V \right) \Psi(x) &=& E \Psi(x) \quad x \in I_d = (0,1)^d, 
\label{TISE1} \\ 
\Psi(x) &=& 0 \quad x \in \partial I_d,
\label{TISE2}
\end{eqnarray}
where $\Psi$ is a normalized eigenfunction.
Assume
that all the masses and the normalized Planck constant are one, and that
the potential $V$ is a smooth function as we will explain below.
Our problem is to estimate  the ground state energy (i.e., the smallest eigenvalue), $E_0$, with error $\e$. 

Such eigenvalue problems can be solved by discretizing the continuous Hamiltonian to obtain a symmetric
matrix, and then by
approximating the smallest matrix eigenvalue. Eigenvalue problems involving 
symmetric matrices are conceptually easy and methods such as the bisection method can be
used to solve them with cost proportional to the matrix size, modulo polylog factors. The
difficulty is that the discretization leads to a matrix of size that is
exponential in $d$.
Hence, the cost for approximating the matrix eigenvalue is prohibitive when $d$ is large. 
In fact,
a stronger result is known, namely the cost  
of any deterministic classical algorithm 
must be at least exponential in $d$, i.e., the problem suffers from the 
curse of dimensionality \cite{Pap07}.

In certain cases, quantum algorithms may be able to break the curse of dimensionality by
computing $\e$-accurate eigenvalue estimates with cost polynomial in $\e^{-1}$ and in the degrees of
freedom $d$.
This was shown in \cite{PPTZ12}
where we see that if the potential is smooth, nonnegative and uniformly bounded by a relatively small
constant
there exists a quantum algorithm approximating the ground state energy with relative error 
$\e$ and cost proportional to $d \e^{-(3+\eta)}$, where 
$\eta>0$ is arbitrary.

It is interesting 
to investigate conditions for $V$ beyond those of 
\cite{Pap07,PPTZ12} where quantum algorithms, possibly ones implementing perturbation methods,
approximate the ground state energy without suffering from the curse of dimensionality. 
Indeed, in this paper we assume that $V$ is convex and uniformly bounded by
$C>1$, \footnote{We remark that $C$ may depend on $d$. The results of this paper hold for $C=o\left(d^5\right)$, as we will see later
in Section \ref{sec:eiv}.} 
as opposed to $C\leq 1$ in \cite{Pap07,PPTZ12}. In addition, just like in \cite{PPTZ12}, the potential is non-negative
and its partial derivatives are continuous and uniformly bounded by a constant $C^\prime >0$.
We derive a quantum algorithm estimating the ground state energy with relative error $\e$
and cost polynomial in $\e^{-1}$ and $d$.


In particular our algorithm solves the eigenvalue problem 
for a sequence of Hamiltonians $H_\ell = -\frac{1}{2}\Delta + V_\ell$, for $\ell = 1,2,\ldots ,L$, 
where $V_\ell = \ell \cdot V/L$. The algorithm proceeds in $L$ stages; see Fig.~\ref{Fig:Alg_simple}.
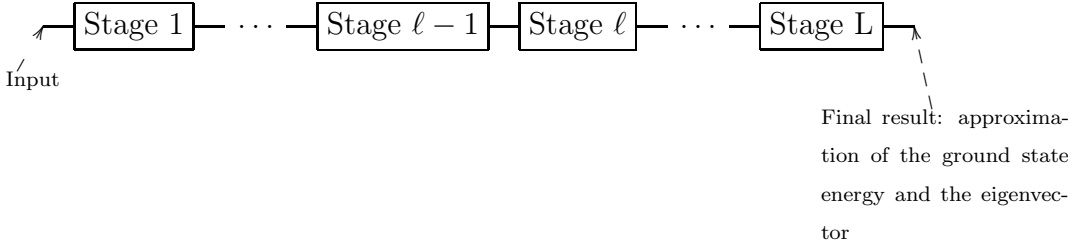
\begin{figure}[H]
\centering
\Qcircuit @R=1em @C=1em {
& & & & & & \gate{\rm Stage \,\, 1} & \qw &\dots & & \gate{\rm Stage \,\, \ell -1} & \gate{\rm Stage \,\, \ell} & \qw &  \dots &  
& \gate{\rm Stage \,\, L} & \qw & &  \\
& & & & \ar@{-->}[ru]|<{\rm\ \ \ \  Input} & & & & & \\
& & & & & \\
& & & & & \\
& & & & & & & & & & & &  & & & &  & \ar@{-->}[uuuul]|<{\textrm{\parbox[t]{0.2\linewidth}{Final result: approximation of the ground state energy and the eigenvector}}}
}
\caption{The $L$ stages of the algorithm}
\label{Fig:Alg_simple}
\end{figure} 

In each stage, the algorithm produces an approximate 
ground state eigenvector of $H_\ell$ which is passed on to the next stage. The fact that $V$ is convex 
allows us to use lower bounds on the fundamental gap \cite{AC11} and to select $L$ accordingly
so that the ground state eigenvectors of the successive Hamiltonians have a large enough 
\lq\lq overlap\rq\rq\  between them. 
This means that the (approximate) ground state eigenvector of $H_\ell$ is also an approximate 
ground state eigenvector of $H_{\ell+1}$. Our algorithm performs  a measurement at every stage,
which produces the desired outcome with a certain probability. We select the parameters of the
algorithm so that the total success probability is at least $3/4$.

In terms of the number of quantum operations and queries, the resulting cost is
\[c(k) \cdot \e^{-(3+\frac{1}{2k})} \cdot C^{\frac{4-2\eta}{1-\eta} + \frac{5-2\eta}{2k(1-\eta)}}\cdot d^{1+\frac{4-2\eta}{1-\eta}+ \frac{3}{2k(1-\eta)}} \]
and the number of qubits is
\[
3 \log \e^{-1} + \frac{2 - \eta}{1-\eta} \cdot \log (Cd) + \Theta \left(d\cdot \log \e^{-1}\right).
\]
In the expressions above
$k$ is a parameter such that 
the order of the splitting formula that we use  for Hamiltonian simulation is $2k+1$,
$c(k)$ increases with $k$,
and $\eta>0$ is arbitrary. In fact, one can optimize the expressions above with 
respect to $k$, as in \cite{PZ12}. We do not pursue this direction, since it would overly 
complicate our analysis and the details of the quantum algorithm. In general, choosing $k=2$ is sufficient.

Furthermore, a direct consequence of our algorithm is that the state it produced at the end of the $L$th stage
{\it overlaps}\footnote{We use the expression \lq\lq overlaps within $1-\delta$\rq\rq to denote that
the square of the magnitude of the projection of one state onto the other is 
bounded from below by $1 - \delta$, for $\delta\in (0,1)$.}
with the ground state of the discretized Hamiltonian within $1-O\left((Cd)^{-\frac{2-\eta}{1-\eta}}\right)$. We modify
the algorithm to prepare approximations of the ground state of the discretized Hamiltonian that overlap
within $1-O(\delta)$, for $\delta = o\left((Cd)^{-\frac{2-\eta}{1-\eta}}\right)$. The resulting cost is
\[
 c(k)\cdot C^{1+\frac{1}{2k}} \cdot d^{2-\frac{1}{2k}} \cdot \e^{-\left(3+\frac{1}{2k} \right)}
\cdot \delta^{ - 1 - \frac{1}{2k} - \frac{1}{2-\eta} - \frac{1}{k(2-\eta)}}
\]
and the number of qubits is
\[
3 \log \e^{-1} +  \log \delta^{-1} + \Theta \left(d \log\e^{-1}\right).
\]

\section{Discretization error}

The finite difference method is frequently used to discretize 
partial differential equations, and approximate their solutions. 
The method with mesh size $h = \frac{1}{n+1}$ yields an 
$n^d \times n^d$ matrix $M_h = -\frac{1}{2}\Delta_h + V_h$, where $\Delta_h$ denotes 
the discretized Laplacian and $V_h$ the diagonal matrix whose  
entries are the evaluations of the potential on a regular grid 
with mesh size $h = \frac{1}{n+1}$.

$M_h$ is a symmetric positive definite and sparse matrix. For a potential
function $V$ that has uniformly bounded first order partial derivatives, we have 
\cite{Weinberger56,Weinberger58}
\begin{equation}
\label{eq:cont-discr}
|E_0 - E_{h,0}| \leq c_1 d h,
\end{equation}
where $E_{h,0}$ is the smallest eigenvalue of $M_h$.
Consider $\hat E_{h,0}$ such that 
\begin{equation}
\label{eq:app1}
|E_{h,0} -\hat E_{h,0}| \leq c_2 d h.
\end{equation}
Then we have $|1 - \frac{\hat E_{h,0}}{E_0}| \leq c^\prime h$,
where $c^\prime$ is a constant. The 
inequality follows by observing that 
$E_0 \geq d\pi^2/2$, for any $V\geq 0$. Hence we estimate $E_0$ with 
\textit{relative error} $O(\e)$  by taking $h\le\e$ and approximating the lowest eigenvalue 
of $M_h$ with absolute error $O(dh)$.

\section{Quantum Algorithm}

We consider the Hamiltonian $H_\ell = -\frac{1}{2}\Delta + \frac{\ell V}{L}$,
and the respective discretized Hamiltonian $M_{h,\ell} = - \frac{1}{2} \Delta_h + \frac{\ell V_h}{L}$,
where $\ell=1,2,\ldots ,L$ and the value of $L$ will be chosen appropriately later. We proceed in
$L$ stages. In the $\ell$th stage, we solve the eigenvalue problem for $H_\ell$ 
(and $M_{h,\ell}$)
and pass the results to the next stage. In each stage the  eigenvalue problem is solved
using phase estimation.

In the following section we present some of the properties of phase estimation that we need for our algorithm.
We present our algorithm in sections \ref{sec:eiv} and \ref{sec:ground_state}. The former section deals with the 
estimation of the ground state energy of $H$ and the latter section deals with the estimation of the 
ground state eigenvector of $M_h$.

\subsection{Phase estimation improves approximate eigenvectors}
\label{sec:phase_estimation}

Phase estimation \cite[Fig. 5.2, 5.3]{NC00} is used to approximate the eigenvalues of unitary matrices provided certain conditions hold. The input is the eigenvector that corresponds
to the eigenvalue of interest and the eigenvalue estimate is computed using a measurement outcome at the end of the algorithm. Approximate eigenvectors can also be used input as as long as the magnitude of their projection on the true eigenvector is not exponentially small \cite{AL99}. In such a case, after the measurement, the quantum register that was holding the approximate eigenvector now holds a new state that is also 
an approximate eigenvector, often an improved one. This was observed in \cite{TM01} without, however, showing rigorous error estimates and conditions. In this section we study
the eigenvector approximation using phase estimation.

Let $A$, $\| A\|\le R$, be an $n^d\times n^d$  Hermitian matrix. Then the eigenvalues of $U=e^{iA/R}$ have the form
$e^{i\lambda /R}$, where $\lambda$ denotes an eigenvalue of $A$. Equivalently $e^{i\lambda /R}=e^{2\pi i \phi_\lambda}$, where $\phi_\lambda = \lambda /(2\pi R)\in [0,1)$ is the phase 
corresponding to $\lambda$. 

Besides the (approximate) eigenvector, phase estimation uses  matrix exponentials of the form  $U^\tau=e^{iA\tau /R}$ to accomplish its task. Frequently, approximations $\tilde U_\tau$
are used instead. For instance, when $A$ is given as a sum of Hamiltonians each of which can be implemented efficiently one can use a splitting formula \cite{Suzuki90,Suzuki92}
to approximate $U_\tau$. Let the initial state and the matrix exponentials in phase estimation be as follows:
\begin{itemize}
\item \textit{Initial state:} We have $\ket{0}^{\otimes b}$ in the top register,
that deals with the accuracy, and $ \ket{\psi_{\textrm{in}}}$ in the bottom register.
\item \textit{Matrix exponentials:} We have a unitary matrix $\tilde U_{2^t}$
approximating
$U^{2^{t}}= e^{iA2^t/R}$, for $t=0,1, \ldots ,b-1$. 
Assume that the total error in the approximation of the exponentials is 
bounded by $\e_H$, i.e.
\begin{equation}\label{eq:expErr}
\sum_{j=0}^{b-1} \norm{U^{2^j} - \tilde U_{2^j}} \leq \e_H ,
\end{equation}
which implies that
\begin{equation*}
\norm{U^t - \tilde{U}_t} \leq \e_H , \,\, \textrm{for } t=0,1,\ldots ,2^b-1
\end{equation*}
\end{itemize}

Denoting  by $\{\la_j, \ket{u_j}\}_{j=0,1,\ldots ,n^d-1}$ the
eigenpairs of $A$ we have 
\begin{equation}\label{cj}
\ket{\psi_{\textrm{in}}} = \sum_{j}c_j \ket{u_j}.
\end{equation}

Given a relatively rough approximation of the eigenvector of interest as input in the bottom register of phase estimation,
we show conditions under which the bottom register at the end of the algorithm,
and after the measurement, holds a state that is an improved approximation of the eigenvector of interest. 
To simplify matters we proceed in two steps. If the top register is $b$ qubits long as we discussed above, then the conditions 
for the improvement are shown in Proposition \ref{thm1} in the Appendix, but the resulting success probability is not satisfactory.
To increase the success probability we have extended the top register by $t_0$ qubits and modified part of the proof of Proposition \ref{thm1} 
to obtain the theorem below. (Proposition \ref{thm1} is really a simplified version of this theorem.)


\begin{theorem} 
\label{thm1mod}
Let $\ket{\psi_{m^\prime}}$ be the final state in the bottom register after measuring $m^\prime$ on the top register of the phase estimation 
with initial state $\ket{0}^{\otimes (b+t_0)} \ket{\psi_{\rm in}}$ and unitaries $\tilde{U}_t$, $t=0,1,\ldots ,2^{b+t_0}-1$ and $t_0 \geq 1$. 
Let $c_0=\inner{u_0}{\psi_{\rm in}}$ and $c_0^\prime=\inner{u_0}{\psi_{m^\prime}}$, where $\ket{u_0}$ is the ground state eigenvector.
If
\begin{itemize}
\item $b$ is such that the phases satisfy $\left|\phi_j - \phi_0\right| > \frac{5}{2^{b}}$ 
for all $j=1,2,\ldots ,n^d-1$,
\item $|c_0|^2 \geq \frac{\pi^2}{16}$,
\end{itemize} 
then with probability $p \geq |c_0|^2 \left(1 - \frac{1}{2(2^{t_0} - 1)}\right) - \left(\frac{5\pi^2}{2^5} + 
\frac{1-\frac{\pi^2}{16}}{2^5}\right) \cdot \frac{1}{2^{t_0}} - 2\sum_{j = 0}^{b+t_0 -1} 
\norm{U^{2^j} - \tilde U_{2^j}}$ we obtain the measurement outcome $m^{\prime}$ satisfying
\begin{itemize}
\item $m^\prime \in \mathcal{G}:= \left\{m\in\{0,1,\ldots 2^{b+t_0}-1\} \,\, : \,\, \left|\phi_0 - 
\frac{m^\prime}{2^{b+t_0}}\right| \leq \frac{1}{2^b}\right\}$
\end{itemize} 
and
\begin{itemize}
\item if $1-|c_0|^2 \leq \gamma \e_H$ then $1 - |c_0^{\prime}|^2 \leq 
(\gamma + 14 )\e_H$
\item if $1-|c_0|^2 \geq \gamma \e_H^{1-\eta}$, for $\eta \in (0,1)$, 
then $|c_0^{\prime}| \geq |c_0|$
\end{itemize}
where $\gamma>0$ is a constant. 
\end{theorem}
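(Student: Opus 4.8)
The plan is to start from the ideal phase-estimation state built from the exact exponentials $U^{2^j}$, read off the conditional bottom-register state, and then layer on two perturbations: the $t_0$ guard qubits (which only reshape the measurement statistics) and the approximate exponentials $\tilde U_{2^j}$ (which perturb the global state in norm). First I would write the pre-measurement state as
$$\ket{\Phi}=\sum_{j}c_j\sum_{m=0}^{2^{b+t_0}-1}\alpha_{m,j}\ket{m}\ket{u_j},\qquad |\alpha_{m,j}|^2=\frac{\sin^2\!\big(\pi 2^{b+t_0}\phi_j\big)}{2^{2(b+t_0)}\,\sin^2\!\big(\pi(\phi_j-m/2^{b+t_0})\big)},$$
where the numerator is $m$-independent because $m$ is an integer. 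Measuring $m'$ collapses the bottom register to the unnormalized vector $\sum_j c_j\alpha_{m',j}\ket{u_j}$, so $|c_0'|^2=|c_0|^2|\alpha_{m',0}|^2/\sum_j|c_j|^2|\alpha_{m',j}|^2$, and the entire analysis reduces to controlling the diagonal weight $|\alpha_{m',0}|^2$ from below and the off-diagonal weights $|\alpha_{m',j}|^2$, $j\ge1$, from above, on the set $\mathcal G$.

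For the off-diagonal weights I would use the gap hypothesis $|\phi_j-\phi_0|>5/2^b$: since every $m\in\mathcal G$ lies within $1/2^b$ of $\phi_0$, it sits at distance exceeding $4/2^b$ from each $\phi_j$, $j\ge1$; the elementary bound $|\sin\pi x|\ge 2|x|$ then gives $|\alpha_{m,j}|^2=O(2^{-2t_0})$ uniformly over $m\in\mathcal G$ and $j\ge1$, whence $\sum_{j\ge1}|c_j|^2|\alpha_{m,j}|^2\le(1-|c_0|^2)\,O(2^{-2t_0})$. For the diagonal weight and the success probability I would invoke the guard-qubit accuracy estimate of \cite{NC00}: requesting accuracy $2^{-b}$ with $t_0$ extra qubits yields $\sum_{m\in\mathcal G}|\alpha_{m,0}|^2\ge1-\tfrac{1}{2(2^{t_0}-1)}$, which supplies the leading term $|c_0|^2\big(1-\tfrac{1}{2(2^{t_0}-1)}\big)$. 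The hypothesis $|c_0|^2\ge\pi^2/16$ keeps this ground-state term dominant.

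Next I would reinstate the approximate exponentials. Writing $\ket{\tilde\Phi}$ for the state produced by the $\tilde U_{2^j}$, a telescoping estimate gives $\|\ket{\Phi}-\ket{\tilde\Phi}\|\le\sum_{j=0}^{b+t_0-1}\|U^{2^j}-\tilde U_{2^j}\|$, and the trace-distance inequality shifts the probability of any measurement event by at most twice this quantity, which is exactly the subtracted term $2\sum_{j}\|U^{2^j}-\tilde U_{2^j}\|$; the same perturbation moves the conditional fidelity $|c_0'|^2$ by a comparable amount.

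Finally, for the eigenvector conclusions I would evaluate $1-|c_0'|^2=\big(\sum_{j\ge1}|c_j|^2|\alpha_{m',j}|^2\big)\big/\big(\sum_j|c_j|^2|\alpha_{m',j}|^2\big)$ at the measured $m'$. The numerator is uniformly $O(2^{-2t_0})(1-|c_0|^2)$ by the off-diagonal bound; the denominator is controlled from below by the $j=0$ term, which is why I must restrict attention to the core of $\mathcal G$ where $|\alpha_{m',0}|^2$ is bounded away from zero. This is the main obstacle: the explicit form of $|\alpha_{m',0}|^2$ shows it can be small near the edge of $\mathcal G$, so those outcomes must be discarded, and summing $|\alpha_{m,0}|^2$ over the discarded edge costs $O(2^{-t_0})$ in probability, which is the origin of the subtracted $2^{-t_0}$ terms. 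On the retained event the exact analysis already gives $|c_0'|\ge|c_0|$; reinstating the $\tilde U$ perturbation then splits into the two stated regimes. When $1-|c_0|^2\le\gamma\e_H$ the inherited deficit together with the $\tilde U$ error yields $1-|c_0'|^2\le(\gamma+14)\e_H$, whereas when $1-|c_0|^2\ge\gamma\e_H^{1-\eta}$ the improvement margin exceeds the $O(\e_H)$ perturbation, so $|c_0'|\ge|c_0|$ survives. Guaranteeing the latter domination per outcome, rather than merely on average, is the delicate point, and the choice of the split at the scale $\e_H$ versus $\e_H^{1-\eta}$ is what makes it work.
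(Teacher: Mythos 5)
Your proposal is correct and follows essentially the same route as the paper: the same decomposition into exact phase-estimation amplitudes $\alpha(m,\phi_j)$, the same off-diagonal bound $O(2^{-2t_0})$ from the gap hypothesis, the same guard-qubit probability estimate, the same trace-distance accounting for the approximate exponentials, and the same case split at the end. The one step you assert rather than prove --- that the outcomes in $\mathcal{G}$ where $|\alpha(m',\phi_0)|^2$ is too small to dominate the denominator carry total probability $O(2^{-t_0})$ --- is precisely the content of the paper's Lemma \ref{lem:3}, whose case analysis (over how close $2^{b+t_0}\phi_0$ is to an integer) is the only substantive work your sketch leaves out; note also that these bad outcomes are not always literally at the edge of $\mathcal{G}$, but are those far from the best outcome $m_0$ on a scale set by that proximity parameter.
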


\begin{proof} 
Just like in Proposition \ref{thm1} we derive an equation similar to  (\ref{eq:1}), namely, 
\[|c_0^\prime|^2=
| \inner{m^\prime,\psi_{m^\prime}}{m^\prime,u_0} |^2 = \frac{\left| c_0 \alpha (m^\prime,\phi_0) + \frac{1}{2^b} \sum_{j=0}^{n^d-1} c_j \sum_{k=0}^{2^{b+t_0}-1} e^{-2\pi i m^\prime k/2^{b+t_0}}  \inner{u_0}{x_{j,k}} \right|^2}{\norm{\ket{\psi_{1,m^\prime}} + \ket{\psi_{2,m^\prime}}}^2},
\]
Without accounting for the error due to the approximations $\tilde U_t$, $ t=0,\dots,2^{b+t_0}-1$, 
with probability at least $|c_0|^2 \geq |c_0|^2 \cdot \left(1-\frac{1}{2(2^{t_0} - 1)}\right)$
we get a result $m^\prime$ such that 
$m^\prime \in \mathcal{G}$, with $\mathcal{G} = \left\{m\in\{0,1,\ldots 2^{b+t_0}-1\} \,\, : \,\, \left|\phi_0 - 
\frac{m^\prime}{2^{b+t_0}}\right| \leq \frac{2^{t_0}}{2^{b+t_0}} = \frac{1}{2^b}\right\}$, see
\cite[Thm. 11]{BHMT02}.
Moreover, according to Lemma \ref{lem:3} the probability of getting a result
$m^\prime \in \mathcal{G}$ with $\frac{|\alpha (m^\prime , \phi_j)|^2}{|\alpha (m^\prime , \phi_0)|^2}
\leq \frac{\pi^2}{32}$ for all $j\geq 1$ is at least 
$|c_0|^2 \left(1-\frac{1}{2(2^{t_0}-1)}\right) - \left(\frac{5\pi^2}{2^5} + 
\frac{1-\frac{\pi^2}{16}}{2^5}\right) \cdot \frac{1}{2^{t_0}}$.

Accounting now for the error due to the approximation of the matrix exponentials, the probability of getting an outcome $m^\prime$
that belong to $\mathcal{G}$ and also $\frac{|\alpha (m^\prime , \phi_j)|^2}{|\alpha (m^\prime , \phi_0)|^2}
\leq \frac{\pi^2}{32}$ for all $j\geq 1$ is
at least $|c_0|^2 \left(1 - \frac{1}{2(2^{t_0} - 1)}\right) - \left(\frac{5\pi^2}{2^5} + 
\frac{1-\frac{\pi^2}{16}}{2^5}\right) \cdot \frac{1}{2^{t_0}} - 2\sum_{j = 0}^{b+t_0 -1} 
\norm{U^{2^j} - \tilde U_{2^j}}$, see \cite[pg. 195]{NC00}. From now on we consider only such 
outcomes. 

As in Proposition \ref{thm1}, we have the equivalent of (\ref{eq:3}).  Using 
$\frac{|\al(m^{\prime},\phi_j)|^2}{|\al(m^{\prime},\phi_0)|^2} \leq \frac{\pi^2}{32}$, $j \geq 1$,  we obtain
\begin{eqnarray}
|c_0^{\prime}| &>& |c_0| \left(\frac{1}{\sqrt{|c_0|^2 + \frac{\pi^2}{32}\cdot \sum_{j=1}^{n^d-1} |c_j|^2 }}- 7 \frac{\e_H}{|c_0|} \right) \nonumber \\
 &=& |c_0| \left(\frac{1}{\sqrt{|c_0|^2 + \frac{\pi^2}{32}\cdot (1 - |c_0|^2) }}- 7 \frac{\e_H}{|c_0|} \right) ,
 \label{eq:4mod}
\end{eqnarray}

since $\sum_{j=0}^{n^d-1}|c_j|^2 = 1$.

Now we examine the different cases, depending on the
magnitude of $|c_0|$.

\textit{Case 1:} $1-|c_0|^2 \leq \gamma \e_H$, for a constant $\gamma$. 
Then (\ref{eq:4mod}) becomes
\[
|c_0^{\prime}| > |c_0| \left(1 - 7 \frac{\e_H}{\sqrt{1-\gamma \e_H}} \right) ,
\] 
because $f(x) = \frac{1}{\sqrt{x + \frac{\pi^2}{32} (1 - x)}}$ is a monotonically decreasing function
and $|c_0|^2 \leq 1$. Hence
\begin{eqnarray*}
|c_0^{\prime}|^2 &>& |c_0|^2 \left(1 - \frac{14}{\sqrt{1-\gamma \e_H}} \e_H + \frac{49}{1-\gamma \e_H} \e_H^2\right) \\
&\geq& (1-\gamma \e_H) \cdot \left(1 - \frac{14}{\sqrt{1-\gamma \e_H}} \e_H + \frac{49}{1-\gamma \e_H} \e_H^2\right) \\
&=& 1 - \gamma \e_H - 14 \e_H \sqrt{1-\gamma \e_H} + 49 \e_H^2 \\
&\geq & 1 - \gamma \e_H - 14 \e_H + 49 \e_H^2 \geq 1 - (\gamma + 14) \e_H ,
\end{eqnarray*}
since $1- \gamma \e_H < 1$. This concludes the first part of the theorem.

\textit{Case 2:} $1-|c_0|^2 \geq \gamma \e_H^{1-\eta}$, for some $\eta \in (0,1)$ and $\gamma >0$.
Then (\ref{eq:4mod}) becomes
\[
|c_0^{\prime}| > |c_0| \left(\frac{1}{\sqrt{1- \left(1 - \frac{\pi^2}{32}\right) \gamma \e_H^{1-\eta}}} - 7 \frac{\e_H}{\pi/ 4} \right),
\] 
because $f(x) = \frac{1}{\sqrt{x + \frac{\pi^2}{32} (1 - x)}}$ is a
monotonically decreasing function and $|c_0|^2 \geq \frac{\pi^2}{16}$.

Note that $\frac{1}{\sqrt{1-a}} \geq \sqrt{ 1 + a }$, for $ |a| \leq 1$. Hence
\begin{eqnarray*}
|c_0^{\prime}|^2 &>& |c_0|^2 \left( 1 + \left(1 - \frac{\pi^2}{32}\right)\gamma \e_H^{1-\eta} - \frac{56}{\pi} \e_H \sqrt{1 + \left( 1 - \frac{\pi^2}{32}\right)\gamma \e_H^{1-\eta}} + \frac{28^2}{\pi^2} \e_H^2\right) \\
&>& |c_0|^2 \left(1 + \left(1 - \frac{\pi^2}{32}\right)\gamma \e_H^{1-\eta} - O(\e_H)\right)> |c_0|^2.
\end{eqnarray*}
This concludes the proof, since we can discard the $O(\e_H)$ terms for $\e_H$ sufficiently small.
\end{proof}

\subsection{Approximation of the ground state energy}
\label{sec:eiv}
 
As we already indicated our algorithm goes through $L$ stages; recall Fig.~\ref{Fig:Alg_simple}. 
In the $\ell$th state the discretized potential is $\ell \cdot V_h/L$ and we
consider the Hamiltonian $M_{h,\ell} = -\frac{1}{2}\Delta_h + \frac{\ell}{L} V_h$. Let 
$\ket{u_{0,\ell}}$ be its ground state eigenvector.
We approximate the ground state energy (the minimum eigenvalue) of $M_{h,\ell}$
within relative error $\e$, $\ell = 1,2, \ldots ,L$. 
We show how to set up the parameters of the algorithm so that
the state produced after the measurement
in the bottom register at the end of stage $\ell - 1$ is an approximation of both $\ket{u_{0,\ell-1}}$
and $\ket{u_{0,\ell}}$ and, therefore, can be used as input in the $\ell$th stage. We repeat this procedure
until $\ell=L$. The purpose of the stages $\ell=1,\dots, L-1$ is to gradually produce a relatively good
approximation of the ground state eigenvector $\ket{u_{0,L}}$ of the Hamiltonian $M_h=M_{h,L}$, 
with high probability. The last stage computes the approximation of the ground state energy of $M_h$.



We first introduce some useful notation.
Phase estimation requires two quantum registers \cite[Fig.~5.2,~5.3]{NC00}. 
The top register determines the accuracy and the probability of
success of the algorithm and
the bottom register holds an approximation of the
ground state of $M_{h,\ell}$. Let $\ket{\psi_{{\rm in},\ell}}$ be
the state on the bottom register at 
the beginning of the $\ell$th stage and let $\ket{\psi_{{\rm out},\ell}}$ be
the state on the same register at the end of the $\ell$th stage; see Fig.~\ref{fig:alg-diagram}.

\begin{figure}[H]
\centering
\subfigure[First stage of the algorithm]{
\Qcircuit @R=1em @C=1em {
& & & & & & & & & \\
& & & & & & & & & \\
& & & & & & \qw & \multigate{5}{ \textrm{\parbox[t]{0.12\linewidth}{Phase Estimation on $W_{h,1}$}}} & \meter
\gategroup{6}{10}{8}{10}{.7em}{\}} \gategroup{6}{6}{8}{6}{.7em}{\{} \gategroup{3}{6}{5}{6}{.7em}{\{} & & \\
& & \ket 0^{\otimes (b+t_0)} & & & &  \vdots & & \vdots & & & \\
& & & & & & \qw & \ghost{ \textrm{\parbox[t]{0.12\linewidth}{Phase Estimation on $W_{h,1}$}}} & \meter & &  \\
& & & & & & \qw & \ghost{ \textrm{\parbox[t]{0.12\linewidth}{Phase Estimation on $W_{h,1}$}}} & \qw &  \qw & & \\
\ket{\psi_{\rm in, 1}} \equiv \ket{u_{0,0}} & & & & & &\vdots & & \vdots & & &  & \ket{\psi_{\rm out, 1}} \\
& & & & & & \qw & \ghost{ \textrm{\parbox[t]{0.12\linewidth}{Phase Estimation on $W_{h,1}$}}} & \qw & \qw & & \\
& & & & & & & & & & \\
&  \\
&
}
}
\\ 
\subfigure[Diagram of the algorithm for stages $\ell-1$  and $\ell$, with $\ell = 2,3, \ldots L$. 
The phase estimation on stage $\ell$ runs for the unitary $W_{h,\ell} = e^{-iM_{h,\ell}/R}$, with $M_{h,\ell} = -\frac{1}{2}
\Delta_h + \ell \cdot \frac{V_h}{L}$]{
\Qcircuit @R=1em @C=1em {
& & & & & \rm Stage \,\, \ell -1 \gategroup{2}{4}{2}{8}{.7em}{^\}} \gategroup{6}{8}{8}{8}{.7em}{\}} & 
& & & & & & & & & & &  & &  \rm Stage \,\, \ell \gategroup{2}{18}{2}{22}{.7em}{^\}} \gategroup{6}{4}{8}{4}{.7em}{\{} \gategroup{3}{4}{5}{4}{.7em}{\{} \gategroup{3}{18}{5}{18}{.7em}{\{} \gategroup{6}{18}{8}{18}{.7em}{\{} 
\gategroup{6}{22}{8}{22}{.7em}{\}} \\
\ar@{-->}[rrd]|<{\rm top\,\, register} & & & & & & & & & & & \ar@{-->}[lllld]|<{\rm measurement} & & & & & & & & & & & & &\\
& & & & \qw & \multigate{5}{ \textrm{\parbox[t]{0.12\linewidth}{Phase Estimation on $W_{h,\ell-1}$}}} & \meter & & & & & & & & & & & & \qw &  \multigate{5}{ \textrm{\parbox[t]{0.12\linewidth}{Phase Estimation on $W_{h,\ell}$}}}  & \meter \\
\ket 0^{\otimes (b+t_0)} & & & &  \vdots & & \vdots & & & & & & & & \ket 0^{\otimes (b+t_0)} & & & & \vdots & & \vdots \\
& & & & \qw & \ghost{ \textrm{\parbox[t]{0.12\linewidth}{Phase Estimation on $W_{h,\ell-1}$}}} & \meter & & & & & & & & & & & & \qw & \ghost{ \textrm{\parbox[t]{0.12\linewidth}{Phase Estimation on $W_{h,\ell}$}}} & \meter  \\
& & & & \qw & \ghost{ \textrm{\parbox[t]{0.12\linewidth}{Phase Estimation on $W_{h,\ell-1}$}}} & \qw &  \qw & & & & & & & & & & & \qw  & \ghost{ \textrm{\parbox[t]{0.12\linewidth}{Phase Estimation on $W_{h,\ell}$}}} & \qw & \qw \\
\ket{\psi_{\rm in, \ell-1}}& & & &\vdots & & \vdots & & & & & &  \ket{\psi_{\rm out, \ell-1}} \equiv \ket{\psi_{\rm in, \ell}} & & & & & & \vdots & & \vdots & & & &\ket{\psi_{\rm out, \ell}} \\
& & & & \qw & \ghost{ \textrm{\parbox[t]{0.12\linewidth}{Phase Estimation on $W_{h,\ell-1}$}}} & \qw & \qw & & & & & & & & & & & \qw & \ghost{ \textrm{\parbox[t]{0.12\linewidth}{Phase Estimation on $W_{h,\ell}$}}} & \qw & \qw  \\
\ar@{-->}[rru]|<{\rm bottom\,\, register} & & & & & & & & & & & & & & & & & & & & & & & &\\
& & & & & & & & & & & & & & & & &  & & &  \\
&
}
}
\\
\subfigure[Last stage of the algorithm. The result $j$ of the measurement of the top register
provides the approximation to the ground state energy]{
\Qcircuit @R=1em @C=1em {
& & & & & & & & \\
& & & & & & & & \\
& & & & \qw & \multigate{5}{ \textrm{\parbox[t]{0.12\linewidth}{Phase Estimation on $W_{h,L}$}}} & \meter & & & & 
\ar@{-->}[lld]|<{{\rm result}\, \ket j} \gategroup{6}{4}{8}{4}{.7em}{\{} \gategroup{3}{4}{5}{4}{.7em}{\{} \gategroup{6}{8}{8}{8}{.7em}{\}}
\gategroup{3}{8}{5}{8}{.5em}{\}} \\
\ket 0^{\otimes (b+t_0)} & & & &  \vdots & & \vdots & & &\\
& & & & \qw & \ghost{ \textrm{\parbox[t]{0.12\linewidth}{Phase Estimation on $W_{h,L}$}}} & \meter & &  \\
& & & & \qw & \ghost{ \textrm{\parbox[t]{0.12\linewidth}{Phase Estimation on $W_{h,L}$}}} & \qw &  \qw & & \\
\ket{\psi_{\rm in, L}} & & & &\vdots & & \vdots & & &  & \ket{\psi_{\rm out, L}} \\
& & & & \qw & \ghost{ \textrm{\parbox[t]{0.12\linewidth}{Phase Estimation on $W_{h,L}$}}} & \qw & \qw & & \\
& & & & & & & & \\
&  \\
&
}
}
\caption{A detailed diagram describing the Repeated Phase Estimation algorithm}
\label{fig:alg-diagram}
\end{figure}
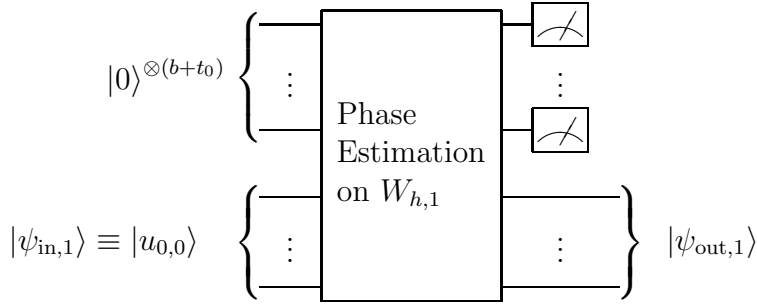
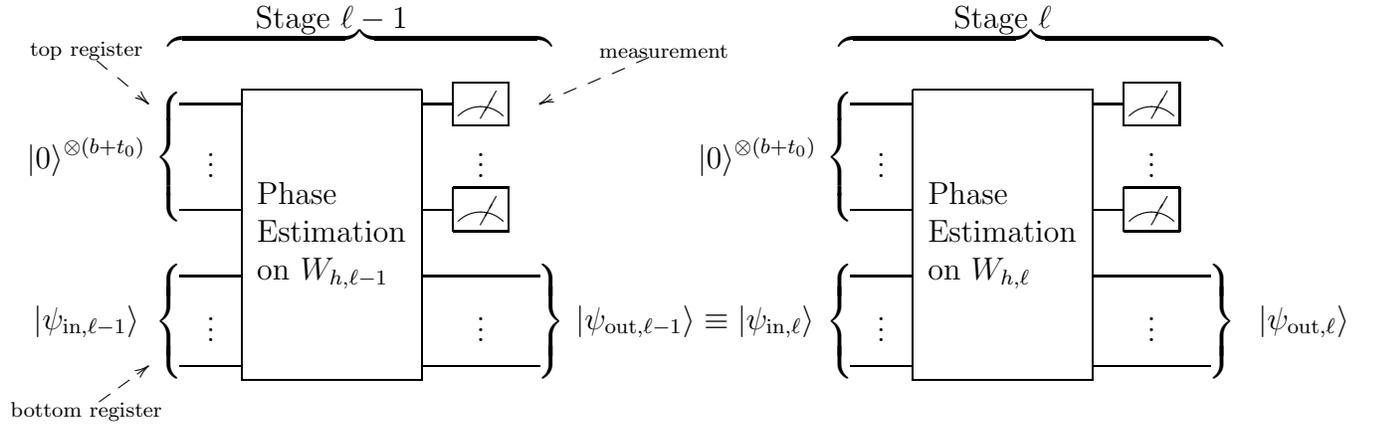
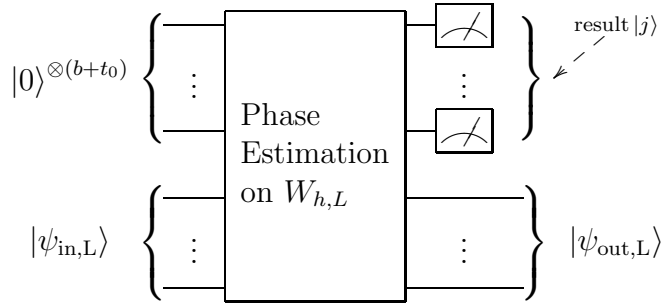

At the very beginning of the algorithm the bottom register is initialled to the state
$\ket{\psi_{{\rm in},1}} = \ket{u_{0,0}}$, 
i.e. the ground state eigenvector of the  discretized Laplacian. 
By choosing an appropriately large $L$, 
and using lower bounds for the gap between the 
first and the second eigenvalues of
Hamiltonians involving convex potentials \cite{AC11}, we ensure that
the initial state of the algorithm has a
good overlap with the ground state of $M_{h,1}$.
Theorem \ref{thm1mod} shows that  we can  maintain this good overlap between
approximate and actual ground state eigenvectors
throughout all the stages with high probability.
We use $b + t_0$ qubits on the top register. The $b$
qubits are used to control the accuracy in the eigenvalue estimates
and the $t_0$ qubits are used to boost the probability of
success of each stage.

\vskip 1pc

We provide an overview of the algorithm.

\begin{enumerate}
\item \textit{Number of qubits:} We have two resisters, the top and the bottom. The top register has $b+t_0$ qubits, while
the bottom register has $d \log_2 h^{-1}$ qubits.
\item \textit{Initial state:} The upper register
is initialized to $\ket{0}^{\otimes (b+t_0)}$.
The lower register is initialized to 
$\ket{\psi_{{\rm in},1}} = \ket{u_{0,0}}$.
\item \textit{Phase estimation:} Run phase estimation for each of the unitary matrices $W_{h,\ell} = e^{-iM_{h,\ell} /R}$ in sequence,  for
$\ell = 1,2,\ldots , L$. where
$R$ a parameter to be defined later in this section. 
In each run the top register is set to $\ket{0}^{\otimes (b+t_0)}$,
while the bottom register holds the approximate eigenstate produced in the previous stage, i.e., 
$\ket{\psi_{{\rm in},\ell}} := \ket{\psi_{{\rm out},\ell-1}}$
for $\ell = 2,3,\ldots ,L$.
\begin{itemize}
\item \textit{Implementation of exponentials:}
Implement each exponential $W_{h,\ell}^{2^j}$
with error $\e_{j,\ell}^S := \norm{W_{h,\ell}^{2^j} - \widetilde{W_{h,\ell}^{2^j}}}$, for 
$j=0,1,\ldots ,t_0+b-1$ using Suzuki's splitting 
formulas \cite{Suzuki90,Suzuki92}.
\end{itemize}
\item \textit{Output:} Let
$j\in \{0,1, \ldots ,2^{t_0 + b}-1\}$
the result of the measurement on the upper register
after the last stage.
Output $\hat E_{h,0} = 2\pi \cdot R \cdot j\cdot 2^{-(b+t_0)}$
\end{enumerate}

Let $\la_{j,\ell}$ be the $j$th eigenvalue of $M_{h,\ell}$. The phase
corresponding to this eigenvalue is 
\[
\phi_{j,\ell} = \frac{\la_{j,\ell}}{2\pi R}.
\]
Set $R = 3dh^{-2} \gg 2dh^{-2} + C \geq \norm{-\frac{1}{2} \Delta_h + V_h}$. \footnote{
Recall that
$C=o(d^5)$ as stated in the Introduction and $h = o(d^{-2})$ as stated in the next section}
This choice of $R$ guarantees that $\phi_{j,\ell} \in [0,1)$
for all $j = 0,1,\ldots ,n^d -1$ and $\ell = 1,2,\ldots ,L$.

\subsubsection{Error analysis}
\label{sec1:Error}

We know (eq.~(\ref{eq:cont-discr}) and (\ref{eq:app1})) that
we can achieve relative error $O(h)$ if we approximate
the ground energy of $M_{h,L}$ with error at most $dh$. This implies
that the algorithm has to approximate the eigenvalues $\la_{0,\ell}$
within error $dh$, for all $\ell=0,1,\ldots ,L$,
which in turn requires $\phi_{0,L}$ to be approximated 
with error $\frac{dh}{2\pi R}$. This translates
to 
$
2^{-b} \leq \frac{dh}{2\pi R},
$
which in turn leads to
\begin{equation}
\label{eq:b}
b = \left\lceil \log \frac{2R\pi}{dh} \right \rceil = \left\lceil \log (6\pi h^{-3}) \right\rceil = \log \Theta \left(h^{-3}\right).
\end{equation}

\subsubsection{Preliminary Analysis}
\label{sec:prelim}

In Theorem \ref{thm1mod} we have shown conditions 
relating the state provided as input and the state produced after the 
measurement in the bottom register of phase estimation, without
assuming a particular Hamiltonian form, as we do in this section.
In the successive application of phase estimation we intend to use the results of one stage as input to the next.
Thus we need to quantify how the results of one stage affect the success probability of the next stage
in the case of the Schr\"odinger equation with a convex potential.

The fundamental gap for Hamiltonians of the form
$-\frac{1}{2}\Delta + V$, where $V$ is a convex potential, is at least
$\frac{3\pi^2}{2d}$
\cite{AC11}. The gap between
the first and second eigenvalues of $M_{h,\ell}$, for 
$\ell = 1,2, \ldots , L$,
is reduced by $O(dh)$ \cite{Weinberger56,Weinberger58}.
Taking $h = o(d^{-2})$, the gap is at least $\frac{3\pi^2}{2d} - o(d^{-1}) \geq 
\frac{\pi^2}{d}$. \footnote{ $|\la_1 (h) - \la_0 (h)| \geq |\la_1 (h) - \la_0| \geq |\la_1 - \la_0| - |\la_1 (h) - \la_1|$
since $\la_0 (h) \leq \la_0$, where the subscript zero denotes the smallest eigenvalue of a Hamiltonian
and the subscript one denotes the second smallest eigenvalue.} 
As a result, the gap between the phases corresponding to
the first two eigenvalues is at least $\frac{\pi^2}{d\cdot 2\pi R}$. 
Setting $h < \frac{2\pi^2}{5}\cdot \frac{1}{d^2}$ , we have that
$2^{-b} < \frac{1}{5} \cdot \frac{\pi^2}{d\cdot 2\pi R}$, 
according to (\ref{eq:b}).
This leads to
$|\phi_0-\phi_j|\geq \frac{5}{2^b}$, for all $j \geq 1$.
Hence, for $h = o(d^{-2})$ the assumptions of Theorem \ref{thm1mod} hold.

Let $L = \omega (d)$ that will be specified later.
Consider the $(\ell - 1)$th stage, with initial state $\ket{\psi_{\rm in, \ell -1}}$
and Hamiltonian $M_{h,\ell-1}$. Assume
$|\inner{\psi_{\rm out,\ell-1}}{u_{0,\ell-1}}| = 1-\kappa \delta$,
where $\kappa > 0$ a constant and $\delta \in [0,1)$ a quantity satisfying
$\delta = \omega ((Cd)^2/L^2)$. That is, $\ket{\psi_{\rm in, \ell -1}}$
is not such a good approximation of the ground state eigenvector $\ket{u_{0,\ell}}$. 
Also assume that the error (\ref{eq:expErr})  due to the approximation of the matrix exponentials is $\e_H=o(\delta)$.
Then the magnitude of the projection of the resulting state  
$\ket{\psi_{\rm out,\ell-1}}$ 
of this stage  onto the
ground state eigenvector follows from Theorem \ref{thm1mod} as we show in the
corollary below.

\begin{cor} 
\label{cor:REP1step}
Let $|\inner{\psi_{\rm in,\ell-1}}{u_{0,\ell-1}}|^2 = 1- \ka \delta$,
where $\ka > 0$ is a constant, $\delta \in [0,1)$ and $\delta = \omega (\e_H)$.
Then 
\[
|\inner{\psi_{\rm out,\ell-1}}{u_{0,\ell-1}}|^2 \geq 1 - \frac{\pi^2 + 1}{32} \kappa \delta, \quad \ell\ge2 .
\]
\end{cor}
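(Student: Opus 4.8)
The plan is to reuse the argument of Theorem~\ref{thm1mod} verbatim up to inequality~(\ref{eq:4mod}) and then specialize it to the value $|c_0|^2 = 1-\kappa\delta$, where here $c_0 = \inner{u_{0,\ell-1}}{\psi_{\rm in,\ell-1}}$ and $c_0' = \inner{u_{0,\ell-1}}{\psi_{\rm out,\ell-1}}$. First I would check that the hypotheses of Theorem~\ref{thm1mod} are in force: the phase-gap condition $|\phi_{j,\ell-1}-\phi_{0,\ell-1}|>5/2^b$ was established in the preliminary analysis for $h=o(d^{-2})$, and $|c_0|^2 = 1-\kappa\delta \ge \pi^2/16$ holds once $\delta$ is small enough (since $1-\pi^2/16>0$). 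This licenses the use of~(\ref{eq:4mod}). Note the conceptual content: the corollary asserts a \emph{contraction} of the eigenvector error from $\kappa\delta$ down to $\tfrac{\pi^2+1}{32}\kappa\delta$, with $\tfrac{\pi^2+1}{32}<1$, so the output overlap is strictly better than the input overlap.

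The heart of the argument is the algebraic simplification of the denominator in~(\ref{eq:4mod}). Substituting $|c_0|^2 = 1-\kappa\delta$ gives
\[
|c_0|^2 + \tfrac{\pi^2}{32}(1-|c_0|^2) = (1-\kappa\delta)\Bigl(1-\tfrac{\pi^2}{32}\Bigr) + \tfrac{\pi^2}{32} = 1 - \tfrac{32-\pi^2}{32}\,\kappa\delta ,
\]
so that, writing $\beta = \tfrac{32-\pi^2}{32}$ and squaring~(\ref{eq:4mod}),
\[
|c_0'|^2 > \frac{1-\kappa\delta}{1-\beta\kappa\delta} - \frac{14\,|c_0|\,\e_H}{\sqrt{1-\beta\kappa\delta}} + 49\,\e_H^2 .
\]
The leading term is decisive: since $1-\beta = \pi^2/32$, the identity $\tfrac{1-u}{1-\beta u}=1-\tfrac{\pi^2}{32}\cdot\tfrac{u}{1-\beta u}$ together with the elementary bound $\tfrac{1}{1-\beta u}\le 1+2\beta u$ (valid for $\beta u\le \tfrac12$, hence for $\delta$ small) yields
\[
\frac{1-\kappa\delta}{1-\beta\kappa\delta} \ge 1 - \frac{\pi^2}{32}\,\kappa\delta - \frac{\pi^2\beta}{16}\,(\kappa\delta)^2 .
\]
Thus the error contracts to $\tfrac{\pi^2}{32}\kappa\delta$ at first order, which is the principal part of the claimed bound.

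It then remains to absorb the two lower-order corrections into the slack $\tfrac{1}{32}\kappa\delta$ that separates $\tfrac{\pi^2}{32}$ from $\tfrac{\pi^2+1}{32}$. I would discard the positive term $49\,\e_H^2$, bound the cross term by $\tfrac{14|c_0|\e_H}{\sqrt{1-\beta\kappa\delta}}\le 20\,\e_H$ for $\delta$ small (using $|c_0|\le1$ and $1-\beta\kappa\delta\ge\tfrac12$), and collect
\[
|c_0'|^2 \ge 1 - \frac{\pi^2}{32}\kappa\delta - \frac{\pi^2\beta}{16}(\kappa\delta)^2 - 20\,\e_H .
\]
The target $|c_0'|^2 \ge 1 - \tfrac{\pi^2+1}{32}\kappa\delta$ is then equivalent to $\tfrac{\pi^2\beta}{16}(\kappa\delta)^2 + 20\,\e_H \le \tfrac{1}{32}\kappa\delta$, i.e. to $\tfrac{\pi^2\beta}{16}\kappa\delta + \tfrac{20\,\e_H}{\kappa\delta}\le\tfrac1{32}$. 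The first summand vanishes as $\delta\to0$, and the second vanishes because $\delta=\omega(\e_H)$, so both fall below $\tfrac1{32}$ for $\delta$ sufficiently small. The main obstacle is not any single estimate but the bookkeeping: confirming that the first-order contraction factor is exactly $\pi^2/32$, and that the $1/32$ margin genuinely dominates both the $O((\kappa\delta)^2)$ curvature term and the $O(\e_H)=o(\delta)$ approximation term at once.
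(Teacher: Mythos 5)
Your proposal is correct and follows essentially the same route as the paper: the paper's proof also just retraces Case 2 of Theorem~\ref{thm1mod} starting from inequality~(\ref{eq:4mod}) with $1-|c_0|^2=\kappa\delta$, obtains the first-order contraction factor $\pi^2/32$, and absorbs the $O((\kappa\delta)^2)$ and $O(\e_H)=o(\delta)$ remainders into the $\tfrac{1}{32}\kappa\delta$ slack using $\delta=\omega(\e_H)$. Your version merely makes the constants and the implicit smallness requirements on $\delta$ explicit where the paper writes $O(\e_H)$.
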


\begin{proof} We reconsider the case 2 of Theorem \ref{thm1mod}. Retracing the steps, 
we reach to
\begin{eqnarray*}
|\inner{\psi_{\rm out,\ell-1}}{u_{0,\ell-1}}|^2 &>& |\inner{\psi_{\rm in,\ell-1}}{u_{0,\ell-1}}|^2
\left( 1 + \left( 1 - \frac{\pi^2}{32}\right) \kappa  \delta - O(\e_H)\right) \\
 &= & (1 - \kappa \delta ) \left( 1 + \left( 1 - \frac{\pi^2}{32}\right) \kappa  \delta - O(\e_H)\right) \\ 
&= & 1 - \frac{\pi^2}{32} \kappa \delta - O(\e_H) + \kappa \delta \cdot O(\e_H) \\
& \geq & 1 - \frac{\pi^2 + 1}{32} \kappa \delta ,
\end{eqnarray*}
where the last inequality is due to  $\delta = \omega (\e_H)$.
\end{proof}

Observe after stage $\ell-1$ is complete, that phase estimation has improved the approximation of $\ket{u_{0,\ell-1}}$.
Note that $\ket{\psi_{{\rm out}, \ell-1}} = \ket{\psi_{{\rm in}, \ell}}$  and $|\inner{u_{0,\ell}}{\psi_{{\rm in}, \ell}}|$ determines the success probability of the $\ell$th stage.
To calculate this, we need to take into account the projection of
the ground state eigenvector $\ket{u_{0,\ell-1}}$ onto $\ket{u_{0,\ell}}$.

Taking into account the lower bound on the gap between the first 
two eigenvalues of $M_{h,\ell}$, we express $\ket{u_{0,\ell-1}}$
in terms of the eigenstates of $M_{h,\ell}$
to get
\begin{equation}
\label{eq:gap}
\norm{V/L}^2_\infty \geq 
(1-|\inner{u_{0,\ell-1}}{u_{0,\ell}}|^2) \left(\frac{\pi^2}{d}\right)^2 
\Rightarrow |\inner{u_{0,\ell-1}}{u_{0,\ell}}|^2 \geq  1 - \left(\frac{Cd}{\pi^2 L}\right)^2,
\end{equation}
for $\ell = 2,\ldots ,L$, see \cite{Pap07}.


\begin{lem} 
\label{lem:RPE2step}
Let 
\begin{eqnarray*}
|\inner{\psi_{\rm out,\ell-1}}{u_{0,\ell-1}}|^2 & \geq &  1 - \kappa^{\prime} \delta \\
|\inner{u_{0,\ell-1}}{u_{0,\ell}}|^2 & \geq & 1 - \left(\frac{Cd}{\pi^2 L}\right)^2,
\end{eqnarray*}
with 
$\delta = \omega \left(\left(\frac{Cd}{\pi^2 L}\right)^2\right)$. Then
\[
|\inner{\psi_{\rm out,\ell-1}}{u_{0,\ell}}|^2 \geq 1 - \kappa^{\prime} \delta - o(\delta), \quad \ell\ge 2,
\]
where $\kappa^\prime>0$  is a constant.
\end{lem}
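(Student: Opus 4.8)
The plan is to read this as a triangle-inequality estimate for the angle metric $\arccos|\inner{\cdot}{\cdot}|$ on rays of the Hilbert space: $\ket{\psi_{\rm out,\ell-1}}$ is close to $\ket{u_{0,\ell-1}}$, which is in turn close to $\ket{u_{0,\ell}}$, so the two endpoints must be close. Rather than invoking the metric property abstractly, I would carry out the equivalent direct estimate, which keeps explicit control of the error terms so that the asymptotic hypothesis $\delta=\omega\left(\left(\frac{Cd}{\pi^2 L}\right)^2\right)$ can be applied at the end. Throughout, write $\beta := \frac{Cd}{\pi^2 L}$, so the second hypothesis reads $|\inner{u_{0,\ell-1}}{u_{0,\ell}}|^2\ge 1-\beta^2$ and the assumption on $\delta$ becomes $\beta^2=o(\delta)$, equivalently $\beta=o(\sqrt\delta)$.

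First I would decompose the input state along $\ket{u_{0,\ell-1}}$: write $\ket{\psi_{\rm out,\ell-1}} = \inner{u_{0,\ell-1}}{\psi_{\rm out,\ell-1}}\,\ket{u_{0,\ell-1}} + \ket{r}$, where $\ket{r}$ is orthogonal to $\ket{u_{0,\ell-1}}$ and $\norm{r}=\sqrt{1-|\inner{u_{0,\ell-1}}{\psi_{\rm out,\ell-1}}|^2}\le\sqrt{\kappa^\prime\delta}$ by the first hypothesis. Pairing with $\bra{u_{0,\ell}}$ gives $\inner{u_{0,\ell}}{\psi_{\rm out,\ell-1}} = \inner{u_{0,\ell-1}}{\psi_{\rm out,\ell-1}}\,\inner{u_{0,\ell}}{u_{0,\ell-1}} + \inner{u_{0,\ell}}{r}$. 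Since $\ket{r}$ lies in the orthogonal complement of $\ket{u_{0,\ell-1}}$, Cauchy--Schwarz bounds the cross term by $|\inner{u_{0,\ell}}{r}|\le \norm{r}\cdot\sqrt{1-|\inner{u_{0,\ell}}{u_{0,\ell-1}}|^2}\le \beta\sqrt{\kappa^\prime\delta}$.

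Next I would apply the reverse triangle inequality to the modulus and then square. Using $|\inner{u_{0,\ell-1}}{\psi_{\rm out,\ell-1}}|\ge\sqrt{1-\kappa^\prime\delta}$ and $|\inner{u_{0,\ell}}{u_{0,\ell-1}}|\ge\sqrt{1-\beta^2}$, I obtain $|\inner{u_{0,\ell}}{\psi_{\rm out,\ell-1}}|\ge\sqrt{(1-\kappa^\prime\delta)(1-\beta^2)}-\beta\sqrt{\kappa^\prime\delta}$, and the right-hand side is nonnegative for the (small) parameter ranges at hand, so squaring is legitimate. Bounding the square-root factor in the cross term by $1$ and dropping the nonnegative higher-order remainders, this yields $|\inner{u_{0,\ell}}{\psi_{\rm out,\ell-1}}|^2\ge 1-\kappa^\prime\delta-\beta^2-2\beta\sqrt{\kappa^\prime\delta}$.

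Finally, the asymptotic hypothesis closes the argument: $\beta^2=o(\delta)$ directly, and $2\beta\sqrt{\kappa^\prime\delta}=O(\beta\sqrt\delta)=o(\sqrt\delta\cdot\sqrt\delta)=o(\delta)$ because $\beta=o(\sqrt\delta)$; hence both residual terms are $o(\delta)$ and $|\inner{\psi_{\rm out,\ell-1}}{u_{0,\ell}}|^2\ge 1-\kappa^\prime\delta-o(\delta)$, as claimed. I expect the only delicate points to be the bookkeeping with absolute values, so that only moduli of inner products enter and the reverse triangle inequality is applied in the correct direction, together with checking nonnegativity before squaring and confirming that the cross term, not merely $\beta^2$, is genuinely absorbed into $o(\delta)$ under $\delta=\omega(\beta^2)$.
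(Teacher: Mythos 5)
Your proof is correct, and it reaches the same final bound as the paper, namely $1-\kappa'\delta-\beta^2-O(\beta\sqrt{\delta})$ with $\beta=\frac{Cd}{\pi^2 L}$, but by a genuinely different route. The paper argues geometrically: it sets $\theta_1=\arccos|\inner{\psi_{\rm out,\ell-1}}{u_{0,\ell-1}}|$ and $\theta_2=\arccos|\inner{u_{0,\ell-1}}{u_{0,\ell}}|$, asserts $|\inner{\psi_{\rm out,\ell-1}}{u_{0,\ell}}|^2\ge\cos^2(\theta_1+\theta_2)$ by appeal to a worst-case picture, and then grinds through double-angle identities, bounding $\cos(2\theta_i)$ from below and $\sin^2(2\theta_i)$ from above before reassembling $\cos^2(\theta_1+\theta_2)$. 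Your orthogonal decomposition $\ket{\psi_{\rm out,\ell-1}}=\inner{u_{0,\ell-1}}{\psi_{\rm out,\ell-1}}\ket{u_{0,\ell-1}}+\ket{r}$, with the cross term controlled by Cauchy--Schwarz restricted to the orthogonal complement of $\ket{u_{0,\ell-1}}$, proves the underlying angle triangle inequality from scratch rather than reading it off a figure; it also handles the complex phases explicitly via the reverse triangle inequality, which the trigonometric formulation glosses over. What the paper's version buys is a reusable geometric picture (the same $\arccos$ bookkeeping recurs implicitly elsewhere in the error propagation); what yours buys is a self-contained, fully rigorous two-line identity in place of an asserted geometric fact, at the cost of having to check nonnegativity before squaring --- which you do. The asymptotic absorption of both $\beta^2$ and the cross term $2\beta\sqrt{\kappa'\delta}$ into $o(\delta)$ under $\delta=\omega(\beta^2)$ matches the paper's final step exactly.
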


\begin{proof}
Let $\theta_1 := \arccos |\inner{\psi_{\rm out,\ell-1}}{u_{0,\ell-1}}|$ and $\theta_2 = \arccos |\inner{u_{0,\ell-1}}{u_{0,\ell}}|$.

\vspace{5mm}
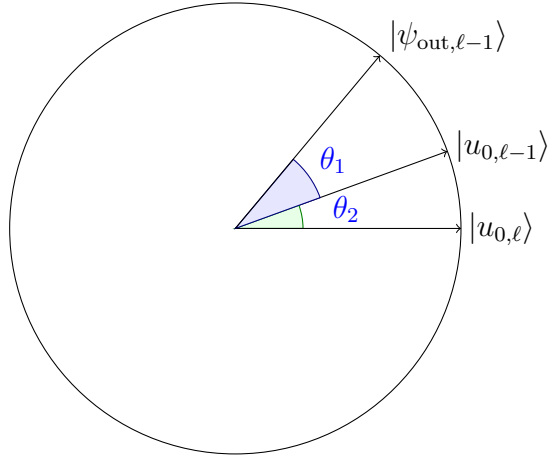
\begin{figure}[H]
\centering
\begin{tikzpicture}[scale=3]
\draw (0,0) circle (1cm);
\draw [->] (0,0) -- (0.939, 0.342);
\filldraw [fill = green!10!white, draw =green!50!black] (0,0) -- (3mm,0mm) arc (0:20:3mm) -- cycle;
\draw [->](0,0) -- (1,0);
\filldraw [fill = blue!10!white, draw =blue!50!black] (0,0) -- (3.758mm,1.368mm) arc (20:50:4mm) -- cycle;
\draw [->] (0,0) -- (0.64,0.766);
\draw[very thick,blue] (10:.5cm)  node {$\theta_2$} ;
\draw[very thick,blue] (35:.53cm)  node {$\theta_1$} ;
\draw[very thick,black] (17:1.24cm)  node {$\ket{u_{0,\ell-1}}$} ;
\draw[very thick,black] (0:1.18cm)  node {$\ket{u_{0,\ell}}$} ;
\draw[very thick,black] (42:1.27cm)  node {$\ket{\psi_{\rm out,\ell-1}}$} ;
\end{tikzpicture}
\caption{The magnitude of the projection of $\ket{\psi_{\rm out, \ell -1}}$ onto $\ket{u_{0,\ell}}$
in the worst case}
\label{Fig:vectors}
\end{figure}
\vspace{5mm}

Then $|\inner{\psi_{\rm out,\ell-1}}{u_{0,\ell}}|^2 \geq \cos^2 (\theta_1 + \theta_2)$ (see Fig. \ref{Fig:vectors}).
Note that
$$\cos^2 (\theta_1 + \theta_2) = \frac{1}{2} [1+ \cos(2 (\theta_1 + \theta_2))] =
\frac{1}{2} [1+ \cos (2\theta_1) \cos (2\theta_2) - \sin (2\theta_1)\sin (2\theta_2)].$$
Now $\cos^2 (\theta_1) = \frac{1}{2} \left[ 1 + \cos (2 \theta_1) \right] \geq 1 - \kappa^{\prime} \delta$, which leads to 
\[\cos (2\theta_1) \geq 1 - 2 \kappa^{\prime}\delta.\]
Similarly 
\[\cos (2\theta_2) \geq 1 - 2 \left(\frac{Cd}{\pi^2 L}\right)^2.\]
Furthermore 
$$\sin^2 (2\theta_1 ) = 1- \cos^2 (2\theta_1) \leq 1 - \left[ 1 - 2 \kappa^{\prime} \delta
\right]^2 \leq 4 \kappa^{\prime} \delta ,$$ 
and similarly $\sin^2 (2\theta_2 ) \leq 4 \left(\frac{Cd}{\pi^2 L}\right)^2$.
According to the above 
\begin{eqnarray*}
\cos^2 (\theta_1 + \theta_2) & \geq &\frac{1}{2} \left[ 1 + \left(1 - 2 \kappa^{\prime} \delta \right) \cdot \left(1 - 2 \left(\frac{Cd}{\pi^2 L}\right)^2\right) - \sqrt{4 \kappa^{\prime} \delta} \cdot \sqrt{4 \left(\frac{Cd}{\pi^2 L}\right)^2}\right]  \\
&\geq  & 1- \kappa^{\prime} \delta + 2 \kappa^{\prime} \delta \left(\frac{Cd}{\pi^2 L}\right)^2
- \left(\frac{Cd}{\pi^2 L}\right)^2 - 2 \sqrt{\kappa^{\prime}} \cdot \sqrt{\delta \cdot \left( \frac{Cd}{\pi^2 L}\right)^2} \\
& \geq & 1 - \kappa^{\prime} \delta - o(\delta)
\end{eqnarray*}
since $\left(\frac{Cd}{\pi^2 L}\right)^2 = o(\delta)$ and $\sqrt{\delta \left(\frac{Cd}{\pi^2 L}\right)^2} = \sqrt{\delta \cdot o(\delta)} = o(\delta)$.
\end{proof}

Consider errors $\e^S_{j,\ell}$ for the exponentials $W_{h,\ell}^{2^j}$
such that the total error for each stage is 
$\sum_{j=0}^{t_0 + b-1} \e^S_{j,\ell} \le \e_H= O\left(\left(\frac{Cd}{\pi^2 L}\right)^2 \right)$.
We use Corollary \ref{cor:REP1step}, and Lemma \ref{lem:RPE2step} to get
\begin{equation}\label{eq:inTrue}
|\inner{\psi_{\rm out,\ell-1}}{u_{0,\ell}}|^2 = 1- \frac{\pi^2 +1}{32} \kappa \delta - o(\delta) \geq 1 - \left(\frac{\pi^2 +1}{32} + 10^{-3}\right) \kappa \delta ,
\end{equation}
where $\kappa>0$ is a constant, $\delta=\omega\left( \left( \frac{Cd}{\pi^2L} \right)^2 \right)$ and  $\ell\ge 2$.

\subsubsection{Initial state}
\label{sec:initial_state}

The initial state of our algorithm is the ground state eigenvector 
$\ket{u_{0,0}}$
of the discretized Laplacian. Hence, $\ket{\psi_{{\rm in}, 1}}:= \ket{u_{0,0}} = \ket{z}^{\otimes d}$,
where $\ket z$ is the ground state eigenvector of the $n \times n$ matrix corresponding
to the (one dimensional) discretization of the second derivative. The coordinates of $\ket z$ are 
\[
z_j = \sqrt{2h} \sin (j\pi h), \quad \rm for \,\, j = 1,2, \ldots,n.
\]
and it can be implemented using the quantum Fourier transform with a number quantum operations proportional to
$\log^2 h^{-1}$ \cite{KR01}.
Thus, the implementation of the initial state of the algorithm $\ket{u_{0,0}}$
requires a number of quantum operations proportional to $d\log^2 h^{-1}$.

According to  (\ref{eq:gap})  we have
\begin{equation}
\label{eq:c0}
|\inner{\psi_{\rm in,1}}{u_{0,1}}|^2 \geq 1 - \left(\frac{Cd}{\pi^2 L}\right)^2.
\end{equation}

\subsubsection{Success probability}
\label{sec:succ_prob1}

According to 
the analysis in Sections \ref{sec:prelim}, \ref{sec:initial_state}, 
and, particularly, observing that the output of one stage is the input to the next from (\ref{eq:inTrue},\ref{eq:c0})
we have 
\[
|\inner{\psi_{\rm in,\ell}}{u_{0,\ell}}|^2 \geq 1 - \delta,
\]
for any $\delta = \omega \left(\left(\frac{Cd}{L}\right)^2\right)$
and $\ell = 1,2,\ldots, L$, as long as the total error due to the approximation of the exponentials
at each stage is $O\left(\left(\frac{Cd}{L}\right)^2\right)$, i.e. $\e_{H} = O\left(\left(\frac{Cd}{L}\right)^2\right)$.
As a result,
\begin{equation}
\label{eq:l.b.overlap}
|\inner{\psi_{\rm in,\ell}}{u_{0,\ell}}|^2 \geq 1 - \kappa_1 \left(\frac{Cd}{L}\right)^{2-\eta},
\end{equation}
for $\kappa_1$ a constant and any $\eta > 0$.

From Theorem \ref{thm1mod},
the total probability of success of the algorithm after $L$ steps is
\begin{equation}
\begin{split}
P_{\rm total} \geq 
\left( \min_{\ell = 1,2,\ldots ,L} 
|\inner{\psi_{\rm in, \ell}}{u_{0, \ell }}|^2 \cdot  
\left(1 - \frac{1}{2(2^{t_0}-1)}\right) \right.
- \\ 
\left. \left(\frac{5\pi^2}{2^5} + \frac{1-\pi^2/16}{2^5} \right) \frac{1}{2^{t_0}} - 
 2\sum_{j=0}^{b+t_0-1}\norm{W_{h,\ell}^{2^j} - 
\widetilde{W_{h,\ell}^{2^j}}} \right)^L. 
 \label{eq:prob_first1}
\end{split}
\end{equation}
We use Suzuki's decomposition formulas to approximate the 
exponentials  $W_{h,\ell}^{2^j}$, for 
$j=0,1,\ldots ,t_0+b-1$; see \cite{Suzuki90,Suzuki92}.
We select 
\[
\e^{S}_{j,\ell} := 2^{j-(b+t_0)}\cdot \left(\frac{Cd}{L}\right)^2.
\]
The total error at each stage is
\[
\sum_{j = 0}^{b+t_0 - 1} \e^{S}_{j,\ell}  \leq \left(\frac{Cd}{L}\right)^2.
\]
As in (\ref{eq:expErr}), the choice of $\e^{S}_{j,\ell}$ also implies that
\[
\norm{W_{h,\ell}^{k}-\widetilde{W^{k}_{h,\ell}}} \leq \left(\frac{Cd}{L}\right)^2 ,
\]
for $k = 0,1,\ldots 2^{t_0+b}-1$.
Using (\ref{eq:l.b.overlap}) and the inequality above, (\ref{eq:prob_first1}) becomes
\begin{equation}
\begin{split}
\label{eq:prob_first}
P_{\rm total} \geq 
\left( \left( 1 - \kappa_1 \left( \frac{Cd}{L}\right)^{2-\eta}\right) \cdot  
\left(1 - \frac{1}{2(2^{t_0}-1)}\right) - 
\right. \\ \left. 
\left(\frac{5\pi^2}{2^5} + \frac{1-\pi^2/16}{2^5} \right) \frac{1}{2^t_0} - 
2 \left(\frac{Cd}{L}\right)^2
\right)^L. 
\end{split}
\end{equation}
Select $t_0$  to satisfy $\frac{1}{2(2^{t_0}-1)} \leq 
\left(\left(\frac{Cd}{L}\right)^{2-\eta}\right)$,
by setting
\begin{equation}
\label{eq:t0}
t_0 = \left\lceil \log \left( \frac{L}{Cd} \right)^{2-\eta}\right\rceil .
\end{equation}
Then (\ref{eq:prob_first}) becomes
\begin{eqnarray*}
P_{\rm total} &\geq & 
\left[ \left( 1 - \kappa_1 \left( \frac{Cd}{L}\right)^{2-\eta}\right) \cdot  
\left(1 - \left( \frac{Cd}{L} \right)^{2-\eta} \right) - 
\frac{6\pi^2}{2^5} \left(\frac{Cd}{L}\right)^{2-\eta} - 
2 \left(\frac{Cd}{L}\right)^2
\right]^L \\
&\geq & \left[ 1 - \left( \kappa_1 + 3 + \frac{6\pi^2}{2^5}\right) \left( \frac{Cd}{L}\right)^{2-\eta} \right]^L
\end{eqnarray*}
Let $\kappa_2 := \kappa_1 + 3 + \frac{6\pi^2}{2^5}$ for brevity. Then
\[P_{\rm total} \geq  \left[ 1 - \kappa_2 \left( \frac{Cd}{L}\right)^{2-\eta} \right]^L  \geq \left( 1 - \frac{(\kappa_2 \cdot (Cd)^{2-\eta}/ L^{1-\eta})^2}{L} \right) \cdot  e^{-\kappa_2 \cdot  (Cd)^{2-\eta} / L ^{1-\eta}},\]
since $\left(1-\frac{x}{n}\right)^n \geq \left(1 - \frac{x^2}{n}\right)\cdot e^{-x} $ for $|x|\leq n$ and $n >1$. In our case
$x = \kappa_2 \cdot \frac{(Cd)^{2-\eta}}{L^{1-\eta}}$, $n = L$ and the inequality is satisfied when $L$ is sufficiently
large, which we choose by
\begin{equation}\label{eq:L1}
L = (Cd)^{(2-\eta)/(1-\eta)}.
\end{equation}
Then the probablity of success becomes
\[
P_{\rm total} \geq \left(1 - \frac{\kappa_2^2}{L}\right) \cdot e^{-\kappa_2} = \left(1 - \frac{\kappa_2^2}{ (Cd)^{(2-\eta)/(1-\eta)}}\right) \cdot e^{-\kappa_2} \geq \frac{1}{2} \cdot e^{-\kappa_2} = \Omega (1),
\]
for $d$ sufficiently large. Hence, the choice of $L$ in (\ref{eq:L1}) guarantees that the algorithm has 
constant probability of success. Moreover, we can get $P_{\rm total} \geq 3/4$
if we repeat the algorithm and choose the median
as the final  result. 


\subsubsection{Cost}

We use Suzuki splitting formulas \cite{Suzuki90,Suzuki92} in order to 
express the exponentials
$W_{h,\ell}^{2^j} = 
e^{-iM_{h,\ell} /R}$, with respect to exponentials involving
either $-\Delta_h$ or
$V_h$. 

The Suzuki splitting formula $S_{2k}$ of order $2k+1$ for $k=1,2,3, \ldots$
approximates exponentials of the form 
$e^{-i (A+B) \Delta t}$, where $A$, $B$ are Hermitian matrices. The formula is
constructed recursively by
\begin{eqnarray*}
S_2 (A,B,\Delta t) &=& e^{-iA\Delta t/2} e^{-iB\Delta t} e^{-iA\Delta t/2} \\
S_{2k} (A,B,\Delta t) &= & [S_{2k-2} (A,B,p_k \Delta_t)]^2 \cdot S_{2k-2} (A,B,(1-4p_k) \Delta t) \\
&\cdot &  [S_{2k-2} (A,B,p_k \Delta_t)]^2,
\end{eqnarray*}
where $k = 2,3,\ldots$. Unfolding the recurence and according to \cite[Thm.~1]{PZ12} we obtain
the approximation of $W^{2^j}_{h,\ell}$ by
\[
\widetilde{W^{2^j}_{h,\ell}} = e^{-i H_{1,\ell} s_{j,0}} \cdot e^{-i H_{2,\ell} z_{j,1}} \cdot e^{-i H_{1,\ell} s_{j,1}}
\cdots  e^{-i H_{2,\ell} z_{j,K_{j,\ell}}} \cdot e^{-i H_{1,\ell} s_{j,K_{j,\ell}}},
\] 
where $s_{j,0},\ldots ,s_{j,K_{j,\ell}}$, $z_{j,1},\ldots ,z_{j,K_{j,\ell}}$
and $K_{j,\ell}$ are parameters with $j = 0,1,\ldots ,t_0 + b - 1$ 
and $\ell = 1,2, \ldots , L$.
The Hermitian matrices $H_{1,\ell}$
and $H_{2,\ell}$ are defined for each stage $\ell$ as
$H_{1,\ell} = - \Delta_h / (2R)$ and $H_{2,\ell} = \ell \cdot V_h / (LR)$.

The exponentials involving $-\Delta_h$ can be implemented in
$O(d\log^2 h^{-1})$ quantum operations using the quantum Fourier transform  \cite{KR01,Wick94}.The exponentials
involving $V_h$ can be implemented with two bit queries. Approximately half 
of the exponentials involve $-\Delta_h$ and the other half involve $V_h$. Consequently
the number of exponentials provides a good estimate on the cost of the algorithm.
The application of Suzuki's splitting formulas in the case of the time independent Schr\"odinger equation is discussed in detail in \cite{PPTZ12}.

Let $N_{j,\ell}$ be the total number of exponentials required 
for the simulation of $W_{h,\ell}^{2^j}$, $N_\ell$  be the total number of the exponentials  
required for the $\ell$th stage and $N$ be
the total number of exponentials required for all the stages of the
algorithm. Using the results of \cite{PZ12} we have
\[
N_{\ell} = \sum_{j=0}^{t_0 + b -1 }N_{j,\ell} \leq \sum_{j=0}^{t_0 + b -1} 2 \cdot 5 \cdot 5^{k-1} \cdot \norm{-\Delta_h/R}_2 \cdot 
2^{j} \left( \frac{4e\cdot 2 \cdot 2^j \norm{\frac{\ell}{L} \cdot \frac{V_h}{R}}_2}{\e^{S}_{j,\ell}}
 \right)^{1/(2k)}\cdot \frac{4e\cdot 2}{3} \left(\frac{5}{3}\right)^{k-1},
\]
where $k$ is chosen so that the order of the Suzuki splitting formula is $2k+1$, $k\ge1$.  
However $\norm{-\Delta_{h}/R}_2 \leq 1$ and $\norm{\frac{\ell}{L} \cdot \frac{V_h}{R}} \leq 
\frac{\ell}{L} \cdot \frac{C}{3dh^{-2}}$, which leads to
\begin{eqnarray*}
N_{\ell} & \leq & \sum_{j = 0}^{t_0 + b -1}
\frac{80e}{3} \cdot 5^{k-1} \cdot \left(\frac{5}{3}\right)^{k-1} \cdot 1 \cdot 2^j \cdot 
\left(\frac{4 e \cdot 2 \cdot 2^j \cdot \frac{\ell}{L} \cdot \frac{C}{3dh^{-2}}}{2^{j-(b+t_0)} \cdot (Cd)^{-2/(1-\eta)}}\right)^{1/(2k)}
\\
&\leq &  
\frac{80e}{3} \cdot 5^{k-1} \cdot \left(\frac{5}{3}\right)^{k-1} \cdot \left(\frac{8e}{3}\right)^{1/(2k)} \cdot \left(2^{b+t_0}\right)^{1/(2k)}
\cdot C^{\left(1+\frac{2}{1-\eta}\right)/(2k)} \cdot d^{\left(\frac{2}{1-\eta} - 1\right) / (2k)} \\
&\cdot & h^{1/k} \cdot \sum_{j=1}^{t_0 + b - 1} 2^j \\
& \leq & \frac{80e}{3} \cdot \left(\frac{25}{3}\right)^{k-1}\cdot \left(\frac{8e}{3}\right)^{1/(2k)} \cdot (40\pi)^{1 + \frac{1}{2k}}
\cdot C^{\frac{2-\eta}{1-\eta} + \frac{5-2\eta}{2k(1-\eta)}}\cdot d^{\frac{2-\eta}{1-\eta}+ \frac{3}{2k(1-\eta)}}
\cdot h^{- (3+ \frac{1}{2k})} \\ 
& \cdot &  \left(\frac{\ell}{L}\right)^{1/(2k)},
\end{eqnarray*}
since $\sum_{j=0}^{t_0+b-1} 2^j \leq 2^{t_0 + b} \leq 24 \pi\cdot (Cd)^{\frac{2-\eta}{1-\eta}} \cdot h^{-3}$.
Denote by $c(k)$ the constant on the expression above that depends on $k$, namely,
 $c(k) :=  \frac{80e}{3} \cdot \left(\frac{25}{3}\right)^{k-1}\cdot \left(\frac{8e}{3}\right)^{1/(2k)} \cdot (24\pi)^{1 + \frac{1}{2k}}$.
We have
\begin{eqnarray*}
N = \sum_{\ell =1}^L N_\ell &\leq & c(k)  \cdot C^{\frac{2-\eta}{1-\eta} + \frac{5-2\eta}{2k(1-\eta)}}\cdot d^{\frac{2-\eta}{1-\eta}+ \frac{3}{2k(1-\eta)}}
\cdot h^{- (3+ \frac{1}{2k})} \cdot \sum_{\ell = 1}^L\left(\frac{\ell}{L}\right)^{1/(2k)} \\
&\leq & c(k) \cdot C^{\frac{2-\eta}{1-\eta} + \frac{5-2\eta}{2k(1-\eta)}}\cdot d^{\frac{2-\eta}{1-\eta}+ \frac{3}{2k(1-\eta)}}
\cdot h^{- (3+ \frac{1}{2k})} \cdot L \\
&\leq & c(k) \cdot C^{\frac{4-2\eta}{1-\eta} + \frac{5-2\eta}{2k(1-\eta)}}\cdot d^{\frac{4-2\eta}{1-\eta}+ \frac{3}{2k(1-\eta)}}
\cdot h^{- (3+ \frac{1}{2k})} .
\end{eqnarray*}
For \textit{relative error} $O(\e)$, it suffices to set $h\leq \e$. In that case
\begin{equation}
N \leq c(k) \cdot \e^{-(3+\frac{1}{2k})} \cdot C^{\frac{4-2\eta}{1-\eta} + \frac{5-2\eta}{2k(1-\eta)}}\cdot d^{\frac{4-2\eta}{1-\eta}+ \frac{3}{2k(1-\eta)}}.
\end{equation}

Recall that the eigenvalues and eigenvectors of the
discretized Laplacian are known and the evolution of a system with
a Hamiltonian involving $-\Delta_h$ can be implemented with $d \cdot O(\log^2 \e^{-1})$
quantum operations using the Fourier transform in each dimension; see e.g.,
\cite[p. 209]{NC00}. The evolution of a system with a Hamiltonian involving
$V_h$ can be implemented using two quantum queries and phase kickback.
Hence the number of quantum operations required to implement
the algorithm is proportional to
\[c(k) \cdot \e^{-(3+\frac{1}{2k})} \cdot C^{\frac{4-2\eta}{1-\eta} + \frac{5-2\eta}{2k(1-\eta)}}\cdot d^{1+\frac{4-2\eta}{1-\eta}+ \frac{3}{2k(1-\eta)}}.\]

The analysis leads to the following theorem.

\begin{theorem}
Consider the ground state energy estimation problem
for the time-independent Schr\"o\-dinger
equation \eqref{TISE1},\eqref{TISE2}.
The quantum algorithm that applies 
$L = (Cd)^{(2-\eta)/(1-\eta)}$ stages of repeated phase estimation with:
\begin{itemize}
\item \textit{Number of qubits:}
The top register has $q:=b+t_0= 3 \log \e^{-1} + \frac{2 - \eta}{1-\eta}\cdot \log (Cd) + O(1)$ qubits, while
the bottom register has $\Theta \left(d \log_2 \e^{-1}\right)$ qubits.
\item \textit{Input state:} The top register
is initialized to $\ket{0}^{\otimes q}$.
The bottom register of the first stage 
is initialized to 
$\ket{u_{0,0}}$.
Furthermore we set
$\ket{\psi_{{\rm in},\ell}} := \ket{\psi_{{\rm out},\ell-1}}$
for $\ell = 2,3,\ldots ,L$.
\item \textit{Implementation of exponentials:}
Implement each exponential $W_{h,\ell}^{2^j}$
using Suzuki's splitting
formulas of order $2k+1$ with simulation error
$\e^{S}_{j,\ell} = 2^{j-q} \cdot (Cd)^{-2/(1-\eta)}$, for 
$j=0,1,\ldots ,q-1$, $\ell = 1,2,\ldots ,L$.
\end{itemize}
approximates the ground state energy $E_0$
with relative error $O(\e)$, as $d\e\to 0$, using
a number of bit queries proportional to
\[
c(k) \cdot \e^{-(3+\frac{1}{2k})} \cdot C^{\frac{4-2\eta}{1-\eta} + \frac{5-2\eta}{2k(1-\eta)}}\cdot d^{\frac{4-2\eta}{1-\eta}+ \frac{3}{2k(1-\eta)}}
\]
and a number of quantum operations proportional to
\[
c(k) \cdot \e^{-(3+\frac{1}{2k})} \cdot C^{\frac{4-2\eta}{1-\eta} + \frac{5-2\eta}{2k(1-\eta)}}\cdot d^{1+\frac{4-2\eta}{1-\eta}+ \frac{3}{2k(1-\eta)}},
\]
where $c(k) :=  \frac{80e}{3} \cdot \left(\frac{25}{3}\right)^{k-1}\cdot \left(\frac{8e}{3}\right)^{1/(2k)} \cdot (24\pi)^{1 + \frac{1}{2k}}$,
with constant probability of success.
\end{theorem}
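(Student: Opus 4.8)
The plan is to assemble the theorem from the quantitative results already established in Sections \ref{sec:phase_estimation}--\ref{sec:succ_prob1} and the cost analysis, tracking a single consistent choice of the parameters $h$, $b$, $t_0$, $L$ and the simulation errors $\e^{S}_{j,\ell}$. First I would fix the accuracy. By the discretization bounds \eqref{eq:cont-discr} and \eqref{eq:app1}, taking $h \leq \e$ and approximating the smallest eigenvalue $E_{h,0}$ of $M_h = M_{h,L}$ with absolute error $O(dh)$ yields relative error $O(\e)$ in $E_0$, using $E_0 \geq d\pi^2/2$. The absolute accuracy $O(dh)$ in the eigenvalue translates, through the scaling $\phi_{0,L} = \la_{0,L}/(2\pi R)$ with $R = 3dh^{-2}$, into the requirement $2^{-b} \leq dh/(2\pi R)$, which is exactly \eqref{eq:b} and gives $b = \log\Theta(h^{-3}) = 3\log\e^{-1} + O(1)$.

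Next I would verify that the hypotheses of Theorem \ref{thm1mod} hold at every stage and that the overlap between the state carried in the bottom register and the true ground state is preserved. The fundamental-gap lower bound of \cite{AC11}, corrected by the $O(dh)$ discretization shift, gives a gap of at least $\pi^2/d$ for each $M_{h,\ell}$ once $h = o(d^{-2})$; this forces $|\phi_0 - \phi_j| \geq 5/2^b$ for all $j \geq 1$, which is the phase-separation assumption, while the initial overlap \eqref{eq:c0} and the condition $|c_0|^2 \geq \pi^2/16$ place us in Case 2 of the theorem. The two-step propagation---Corollary \ref{cor:REP1step} for the improvement within a fixed Hamiltonian, followed by Lemma \ref{lem:RPE2step} for the tilt between consecutive ground states bounded by \eqref{eq:gap}---then shows that the overlap entering the $\ell$th stage satisfies \eqref{eq:inTrue}, and hence \eqref{eq:l.b.overlap}, provided $\delta = \omega((Cd/L)^2)$ and the per-stage simulation error obeys $\e_H = O((Cd/L)^2)$.

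With these overlaps in hand I would bound the total success probability. Substituting \eqref{eq:l.b.overlap} and the simulation-error bound into the per-stage bound of Theorem \ref{thm1mod} and taking the product over the $L$ stages gives \eqref{eq:prob_first}; choosing $t_0$ as in \eqref{eq:t0} so that $1/(2(2^{t_0}-1)) \leq (Cd/L)^{2-\eta}$, and then $L = (Cd)^{(2-\eta)/(1-\eta)}$ as in \eqref{eq:L1}, makes the base $1 - \kappa_2(Cd/L)^{2-\eta}$ and the exponent $L$ balance through $(1-x/n)^n \geq (1-x^2/n)e^{-x}$ to yield $P_{\rm total} \geq \tfrac12 e^{-\kappa_2} = \Omega(1)$. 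Boosting to $3/4$ is the standard median-of-independent-runs argument. This choice of $L$ also fixes $t_0 = \tfrac{2-\eta}{1-\eta}\log(Cd) + O(1)$, so the top register has $q = b+t_0 = 3\log\e^{-1} + \tfrac{2-\eta}{1-\eta}\log(Cd) + O(1)$ qubits, while the bottom register holds the $d\log_2 h^{-1} = \Theta(d\log\e^{-1})$ qubits of the discretized state.

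Finally I would count the resources. Expressing each $W_{h,\ell}^{2^j}$ by the Suzuki formula $S_{2k}$ and invoking \cite[Thm.~1]{PZ12} with the chosen errors $\e^{S}_{j,\ell} = 2^{j-q}(Cd)^{-2/(1-\eta)}$ bounds the per-stage exponential count $N_\ell$; summing over $\ell$ and using $\sum_{\ell} (\ell/L)^{1/(2k)} \leq L$ together with $L = (Cd)^{(2-\eta)/(1-\eta)}$ and $h \leq \e$ gives the stated bit-query count, since the $V_h$-exponentials use two bit queries each and constitute about half of all exponentials. The quantum-operation count carries one extra factor of $d$ (up to logarithmic factors) because each $-\Delta_h$ exponential is implemented in $O(d\log^2 h^{-1})$ operations via the QFT. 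The main obstacle is not any single estimate but the simultaneous consistency of the parameter choices: the same $\delta$, $\e_H$, $L$ and $t_0$ must keep $|c_0|^2 \geq \pi^2/16$ so that Case 2 of Theorem \ref{thm1mod} applies, keep $\delta = \omega((Cd/L)^2)$ and $\e_H = O((Cd/L)^2)$ so that the propagation of Corollary \ref{cor:REP1step} and Lemma \ref{lem:RPE2step} applies, and still yield the advertised cost---so the delicate point is checking that the $\eta$-dependent exponents in $L$, in $t_0$, and in the error schedule all line up.
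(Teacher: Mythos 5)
Your proposal is correct and follows essentially the same route as the paper's own derivation: the same discretization bound giving $h\le\e$ and $b=\log\Theta(h^{-3})$, the same gap-based verification of the hypotheses of Theorem \ref{thm1mod}, the same two-step overlap propagation via Corollary \ref{cor:REP1step} and Lemma \ref{lem:RPE2step}, the same choices of $t_0$ and $L$ with the $(1-x/n)^n\ge(1-x^2/n)e^{-x}$ estimate for the success probability, and the same Suzuki/\cite{PZ12} exponential count for the cost. The consistency concern you flag at the end is exactly what the paper resolves by the single choice $L=(Cd)^{(2-\eta)/(1-\eta)}$, under which $(Cd/L)^2=(Cd)^{-2/(1-\eta)}$ matches the stated error schedule.
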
 
Finally, observe that at the end of the algorithm the  state in the bottom register $\ket {\psi_{\rm out,L}}$ 
satisfies
\[|\inner{\psi_{\rm out,L}}{u_{0,L}}|^2 \geq 1 - O\left( (Cd)^{-\frac{2-\eta}{1-\eta}}\right),\]
where $\ket{u_{0,L}}$  denotes the ground state eigenvector of $M_h$. This observation motivates the algorithm in the next
section.

\subsection{An algorithm preparing a quantum state approximating the ground state eigenvector}
\label{sec:ground_state}

In Section \ref{sec:eiv} we exhibited a quantum algorithm, based on repeated applications of phase estimation,
approximating the ground state energy
of the Hamiltonian $H$ (eq.~(\ref{TISE1}) and (\ref{TISE2})). It turns out we can use the same algorithm 
with different values of its parameters, to prepare a quantum state 
overlapping with the ground state eigenvector of the discretized Hamiltonian $M_h$.

In particular in this section we show an algorithm that:
\begin{enumerate}
\item estimates the ground state energy of the Hamiltonian $H$ with relative error $\e$,
\item approximates the ground state $\ket{u_{0,L}}$ of the discretized Hamiltonian $M_h$ 
by a state $\ket{\psi}$ such that
\[|\inner{u_{0,L}}{\psi}|^2 \geq 1 - O(\delta),\]
where $\delta \in (0,1)$. \footnote{In this section $\delta$ is an input parameter
to the algorithm, and is slightly different from $\delta$ as used in sections \ref{sec:prelim}--\ref{sec:succ_prob1}}
\end{enumerate}

We remark that for $\delta = \Omega \left((C d)^{-\frac{2-\eta}{1-\eta}}\right)$ we can use the algorithm in Section \ref{sec:eiv}.
From now on we assume that $\delta =  o \left((C d)^{-\frac{2-\eta}{1-\eta}}\right)$.

\subsubsection{Error analysis} 

We work exactly like in Section \ref{sec1:Error} to get the same number of
qubits $b$ for the top register of phase estimation (see (\ref{eq:b})), namely,
\[
b = \left\lceil \log \frac{2R\pi}{dh} \right \rceil = \left\lceil \log (6\pi h^{-3}) \right\rceil = \log \Theta \left(h^{-3}\right).
\]

\subsubsection{Success probability}

Equations (\ref{eq:prob_first}), (\ref{eq:t0}) remain the same. What changes is the number of 
stages $L$. Since we require 
\[|\inner{\psi_{\rm out, L}}{u_0}|^2 \geq 1 - O(\delta),\] 
we set 
\begin{equation}
\label{sec2:eq:L}
(Cd/L)^{2-\eta}= \delta \Rightarrow L = Cd / \delta^{1/(2-\eta)}.
\end{equation}
Just as before, the success probability of the algorithm after $L$ stages is
\[
P_{\rm total} \geq  \left( 1 - \frac{(\kappa_2 \cdot (Cd)^{2-\eta}/ L^{1-\eta})^2}{L} \right) \cdot  e^{-\kappa_2 \cdot  (Cd)^{2-\eta} / L ^{1-\eta}} = \left(1 - \frac{o(1)}{L}\right) \cdot e^{-o(1)} \geq 3/4,
\]
according to our choice of $L$ in (\ref{sec2:eq:L}) and the fact that $\delta = o\left((Cd)^{- \frac{2-\eta}{1-\eta}}\right)$,
since for larger $\delta$ we can use the algorithm in Section \ref{sec:eiv} as we pointed out.

\subsubsection{Matrix exponential  error}

Just like before, we approximate $W_{h,\ell}^{2^j}$ with error
\[
\e^{S}_{j,\ell} := 2^{j-(b+t_0)}/(L/Cd)^2,
\]
which according to our choice of $L$ becomes
\[
\e^{S}_{j,\ell} = 2^{j-(b+t_0)} \delta^{2/(2-\eta)}.
\] 
Note that the total error of phase estimation at each stage is
$2 \cdot \sum_{j = 0}^{b+t_0 - 1} \e^{S}_{j,l}  = 2\cdot \delta^{2/(2-\eta)}$, 
which is asymptotically smaller than
$\left(\frac{Cd}{L}\right)^{2-\eta} = \delta$.

\subsubsection{Cost}

We work as before. Using the bounds on the number of
exponentials \cite{PZ12} required to simulate $W_{h,\ell}^{2^j}$ we have
\[
N_{\ell} = \sum_{j=0}^{t_0 + b -1 }N_{j,\ell} \leq \sum_{j=0}^{t_0 + b -1} 2 \cdot 5 \cdot 5^{k-1} \cdot \norm{\Delta_h/R}_2 \cdot 
2^{j} \left( \frac{4e\cdot 2 \cdot 2^j \norm{\frac{\ell}{L} \cdot \frac{V_h}{R}}_2}{\e^{S}_{j,\ell}}
 \right)^{1/(2k)}\cdot \frac{4e\cdot 2}{3} \left(\frac{5}{3}\right)^{k-1},
\]
where the order of the splitting formula is $2k+1$, $k\ge1$.
Since $\norm{\Delta_{h}/R}_2 \leq 1$ and $\norm{\frac{\ell}{L} \cdot \frac{V_h}{R}} \leq \frac{\ell}{L} \cdot \frac{C}{3dh^{-2}} $ we have
\begin{eqnarray*}
N_{\ell} & \leq & 
\frac{80e}{3} \cdot 5^{k-1} \cdot \left(\frac{8e}{3}\right)^{1/2k} \cdot \left(\frac 5 3 \right)^{k-1}
\cdot 2^{(b+t_0)/(2k)} \cdot \left(\frac \ell L \right)^{1/2k} \cdot \left(\frac{C/(dh^{-2})}{\delta^{2/(2-\eta)}}\right)^{1/(2k)} \\ 
&\cdot & \sum_{j=0}^{t_0 + b -1} 2^j 
\\
 & \leq &
\frac{80e}{3} \cdot 5^{k-1} \cdot \left(\frac{8e}{3}\right)^{1/2k} \cdot \left(\frac 5 3 \right)^{k-1}
\cdot 2^{\frac{b+t_0}{2k}\left(1 + \frac{1}{2k}\right)}\cdot \left(\frac \ell L \right)^{1/(2k)} 
\cdot C^{1/(2k)} \cdot d^{-1/(2k)} \\ 
&\cdot & h^{1/k} 
\cdot \delta^{-\frac{1}{k(2-\eta)}} 
\\
& \leq & 
\frac{80e}{3} \cdot 5^{k-1} \cdot \left(\frac{8e}{3}\right)^{1/2k} \cdot \left(\frac 5 3 \right)^{k-1}
\cdot (24\pi)^{1+\frac{1}{2k}}
\cdot \left(\frac \ell L \right)^{1/(2k)}
\cdot C^{1/(2k)} \cdot d^{-1/(2k)} \\
& \cdot &  h^{-\left(3+\frac{1}{2k} \right)}
\cdot \delta^{-\frac{1}{k(2-\eta)}},
\end{eqnarray*}
since $2^{b} \leq 12\pi h^{-3}$ and $2^{t_0} \leq 2 \delta^{-1}$.
Once again, let $c(k) : = 
\frac{80e}{3} \cdot \left(\frac{25}{3}\right)^{k-1} \cdot \left(\frac{8e}{3}\right)^{1/(2k)} \cdot (24\pi)^{1+\frac{1}{2k}}$.
The total number of exponentials required is 
\begin{eqnarray*}
N = \sum_{\ell=1}^L N_\ell 
& \leq &  
c(k) \cdot C^{1/(2k)} \cdot d^{-1/(2k)} \cdot h^{-\left(3+\frac{1}{2k} \right)}
\cdot \delta^{-\frac{1}{k(2-\eta)}} \cdot
\sum_{\ell=1}^L \left(\frac \ell L \right)^{1/(2k)} 
\\
& \leq &
c(k)\cdot C^{1/(2k)} \cdot d^{-1/(2k)} \cdot h^{-\left(3+\frac{1}{2k} \right)}
\cdot \delta^{-\frac{1}{k(2-\eta)}} \cdot L
\\
&\leq &
c(k)\cdot C^{1+\frac{1}{2k}} \cdot d^{1-\frac{1}{2k}} \cdot h^{-\left(3+\frac{1}{2k} \right)}
\cdot \delta^{ - 1 - \frac{1}{2k} - \frac{1}{2-\eta} - \frac{1}{k(2-\eta)}} 
\end{eqnarray*}
For \textit{relative error} $O(\e)$, it suffices to set $h \leq \e$. In that case
\begin{equation}
N \leq c(k)\cdot C^{1+\frac{1}{2k}} \cdot d^{1-\frac{1}{2k}} \cdot \e^{-\left(3+\frac{1}{2k} \right)}
\cdot \delta^{ - 1 - \frac{1}{2k} - \frac{1}{2-\eta} - \frac{1}{k(2-\eta)}} 
\end{equation}

The analysis above leads to the following theorem.

\begin{theorem}
Consider the ground state energy and ground state eigenvector estimation problem
for the time-independent Schr\"odinger
equation \eqref{TISE1},\eqref{TISE2} with a convex potential. Let $\delta = o((Cd)^{-\frac{2-\eta}{1-\eta}})$.
The quantum algorithm that applies $L = Cd \cdot \delta^{-1/(2-\eta)}$ stages 
of repeated phase estimation with
\begin{itemize}
\item \textit{Number of qubits:}
The top register has $q :=b+t_0= 3 \log \e^{-1} +  \log \delta^{-1} + O(1)$ qubits, and
the bottom register has $\Theta (d \log \e^{-1})$ qubits.
\item \textit{Input state:} The top register
is always initialized to $\ket{0}^{\otimes q}$.
The bottom register in the first stage 
is initialized to 
$\ket{u_{0,0}}$.
Thereafter, the bottom register is set according to
$\ket{\psi_{{\rm in},\ell}} := \ket{\psi_{{\rm out},\ell-1}}$
for $\ell = 2,3,\ldots ,L$.
\item \textit{Implementation of exponentials:}
Implement each exponential $W_{h,l}^{2^j}$
using Suzuki splitting
formulas of order $2k+1$ with simulation error
$\e^{S}_{j,l} = 2^{j-q}\cdot \delta^{2/(2-\eta)} $, for 
$j=0,1,\ldots ,q-1$.
\end{itemize}
approximates the ground state energy $E_0$
with relative error $O(\e)$, for $\e = o(d^{-2})$, using a number of bit queries
proportional to
\[
c(k)\cdot C^{1+\frac{1}{2k}} \cdot d^{1-\frac{1}{2k}} \cdot \e^{-\left(3+\frac{1}{2k} \right)}
\cdot \delta^{ - 1 - \frac{1}{2k} - \frac{1}{2-\eta} - \frac{1}{k(2-\eta)}}
\]
and a number of quantum operations proportional to
\[
c(k)\cdot C^{1+\frac{1}{2k}} \cdot d^{2-\frac{1}{2k}} \cdot \e^{-\left(3+\frac{1}{2k} \right)}
\cdot \delta^{ - 1 - \frac{1}{2k} - \frac{1}{2-\eta} - \frac{1}{k(2-\eta)}} ,
\]
where $c(k) : = 
\frac{80e}{3} \cdot 5^{k-1} \cdot \left(\frac{8e}{3}\right)^{1/2k} \cdot \left(\frac 5 3 \right)^{k-1}
\cdot (24\pi)^{1+\frac{1}{2k}}$.
The 
final state on the lower register $\ket{\psi_{\rm out,L}}$ satisfies
\[|\inner{u_0}{\psi_{\rm out,L}}|^2 \geq 1 - O\left( \delta\right),\]
where $\ket{u_0}$ is the ground state eigenvector of $M_h$.
The algorithm succeeds with probability at least $3/4$.
\end{theorem}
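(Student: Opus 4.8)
The plan is to assemble this theorem directly from the four subsections of Section~\ref{sec:ground_state}, since it is in essence a bookkeeping statement that collects the parameter choices of the state-preparation variant. The one genuinely new ingredient, relative to the energy-estimation theorem, is that the number of stages $L$ is now dictated by the \emph{target overlap} $\delta$ rather than by the demand of constant success probability; so the first task is to check that the single choice $L = Cd\,\delta^{-1/(2-\eta)}$ of \eqref{sec2:eq:L} makes every downstream requirement consistent. For the register sizes I would argue as follows: the bottom register stores the discretized wavefunction on the $n^d$-point grid, hence needs $d\log_2 h^{-1} = \Theta(d\log\e^{-1})$ qubits once $h\le\e$; the error analysis gives $b = \log\Theta(h^{-3}) = 3\log\e^{-1}+O(1)$, and the boosting qubits obey $t_0 = \lceil \log(L/(Cd))^{2-\eta}\rceil$ from \eqref{eq:t0}. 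Substituting $L = Cd\,\delta^{-1/(2-\eta)}$ gives $(L/(Cd))^{2-\eta}=\delta^{-1}$, whence $t_0 = \log\delta^{-1}+O(1)$ and $q = b+t_0 = 3\log\e^{-1}+\log\delta^{-1}+O(1)$, as claimed.

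Next I would establish the final overlap and the success probability together, since both ride on the same telescoping argument. Feeding the per-stage improvement of Corollary~\ref{cor:REP1step} into the cross-Hamiltonian projection bound of Lemma~\ref{lem:RPE2step} and iterating over the $L$ stages exactly as in \eqref{eq:inTrue} keeps the running overlap at $1 - O((Cd/L)^{2-\eta}) = 1 - O(\delta)$, which is precisely the eigenvector guarantee $|\inner{u_0}{\psi_{\rm out,L}}|^2 \ge 1 - O(\delta)$. For the probability I would insert $L = Cd\,\delta^{-1/(2-\eta)}$ into the general bound preceding \eqref{eq:L1}; the critical exponent becomes $(Cd)^{2-\eta}/L^{1-\eta} = Cd\,\delta^{(1-\eta)/(2-\eta)}$, which is $o(1)$ \emph{exactly} under the standing hypothesis $\delta = o((Cd)^{-(2-\eta)/(1-\eta)})$, so $P_{\rm total}\ge (1-o(1))\,e^{-o(1)}\ge 3/4$. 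The relative-error claim for the energy is then inherited from the energy-estimation analysis: the top-register resolution $2^{-b}$ pins $\lambda_{0,L}$ to absolute error $O(dh)$, and combining with \eqref{eq:cont-discr}--\eqref{eq:app1} and $h\le\e$ yields relative error $O(\e)$ for $\e=o(d^{-2})$.

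The cost follows from the per-stage exponential counts of \cite{PZ12}. Here I would use the simulation error $\e^S_{j,\ell}=2^{j-q}\,\delta^{2/(2-\eta)}$, which makes the total per-stage exponential error $2\,\delta^{2/(2-\eta)}=o(\delta)$, so that the hypothesis $\delta=\omega(\e_H)$ of Corollary~\ref{cor:REP1step} is met. Summing $N_{j,\ell}$ over $j$ (a geometric series capped by $2^{q}\le 24\pi\,h^{-3}\delta^{-1}$) and then over $\ell$ (using $\sum_{\ell}(\ell/L)^{1/(2k)}\le L$) gives the bit-query bound, and setting $h\le\e$ converts the $h$-powers into $\e$-powers. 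Multiplying by the extra factor $d$ needed to realize each $-\Delta_h$ exponential through $d$ one-dimensional Fourier transforms raises the exponent of $d$ from $1-\tfrac1{2k}$ to $2-\tfrac1{2k}$, yielding the quantum-operation count.

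The step demanding the most care is the simultaneous calibration underlying the second and third paragraphs: the choice of $L$ must at once force the accumulated overlap loss to remain $O(\delta)$, keep $(Cd)^{2-\eta}/L^{1-\eta}=o(1)$ for the probability, and render the simulation error $\e_H$ subdominant to $\delta$. It is precisely the hypothesis $\delta = o((Cd)^{-(2-\eta)/(1-\eta)})$ that reconciles all three constraints, which is exactly why this regime, and not the larger-$\delta$ regime already handled in Section~\ref{sec:eiv}, is the one treated here.
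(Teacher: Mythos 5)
Your proposal is correct and follows essentially the same route as the paper: the theorem is indeed assembled from the four subsections of Section~\ref{sec:ground_state}, with $L=Cd\,\delta^{-1/(2-\eta)}$ chosen so that $(Cd/L)^{2-\eta}=\delta$, $t_0=\lceil\log\delta^{-1}\rceil$ from \eqref{eq:t0}, the success probability bounded via $(Cd)^{2-\eta}/L^{1-\eta}=Cd\,\delta^{(1-\eta)/(2-\eta)}=o(1)$ under the hypothesis $\delta=o((Cd)^{-(2-\eta)/(1-\eta)})$, and the cost obtained by summing the \cite{PZ12} exponential counts with $2^{t_0}\le 2\delta^{-1}$ and $h\le\e$. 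Your identification of the simultaneous calibration of $L$ against the overlap, the probability, and the simulation error $\e_H=o(\delta)$ matches the paper's reasoning exactly.
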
 

\section{Acknowledgements}

The authors would like to thank Joseph F. Traub for useful comments and suggestions.
This research has been supported in part by NSF/DMS.

\newpage

\section{Appendix}

We derive some useful intermediate results for
the analysis in Section \ref{sec:phase_estimation}.

\begin{prop} 
\label{thm1}
Consider phase estimation with initial state $\ket{0}^{\otimes b}\otimes \ket{\psi_{\rm in}}$ 
and the unitaries $\tilde U_t$, $t=0,1,\ldots ,2^b-1$. Let $m$ be 
the measurement outcome of phase estimation and let
$\ket{\psi_m}$ be the state after the measurement in the bottom register.
Let $c_0=\inner{\psi_{\rm in}}{u_0}$ and $c_0^{\prime}=\inner{\psi_m}{u_0}$, where $\ket{u_0}$ is  ground state eigenvector.
If
\begin{itemize}
\item $b$ is such that the phases  satisfy $\left|\phi_j - \phi_0\right| > \frac{5}{2^{b}}$ 
for all $j=1,2,\ldots ,n^d-1$.
\item $|c_0|^2 \geq \frac{\pi^2}{16}$.
\end{itemize}
Then, with probability $p \geq |c_0|^2 \cdot \frac{4}{\pi^2} - 2 \sum_{j=0}^{b-1} \norm{U^{2^j} - \tilde U_{2^j}}$, we get an outcome $m$ such that
\begin{itemize}
\item
$\left|\phi_0 - \frac{m}{2^b}\right| \leq \frac{1}{2^{b+1}}$
\end{itemize}
and
\begin{itemize}
\item if $1-|c_0|^2 \leq \gamma \e_H$ then $1 - |c_0^{\prime}|^2 \leq 
(\gamma + 14)\e_H$
\item if $1-|c_0|^2 \geq \gamma \e_H^{1-\eta}$, for $\eta \in (0,1)$, 
then $|c_0^{\prime}| \geq |c_0|$
\end{itemize}
where $\gamma$ is a positive constant. 
\end{prop}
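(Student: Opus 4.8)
The plan is to run the standard phase-estimation analysis while tracking the transformation of the bottom register, and then reproduce the two-case argument from the proof of Theorem~\ref{thm1mod} specialized to a top register of exactly $b$ qubits (the case $t_0=0$, in which the admissible outcome set collapses to the single integer nearest $2^b\phi_0$). First I would expand $\ket{\psi_{\rm in}}=\sum_j c_j\ket{u_j}$ and write the amplitude attached to $\ket{m}\ket{u_j}$ after the inverse quantum Fourier transform as $c_j\,\alpha(m,\phi_j)$, where
\[
\alpha(m,\phi)=\frac{1}{2^b}\sum_{k=0}^{2^b-1} e^{2\pi i(\phi-m/2^b)k}.
\]
Conditioning on outcome $m$, the bottom-register state is proportional to $\sum_j c_j\alpha(m,\phi_j)\ket{u_j}$, which yields the identity
\[
|c_0'|^2=\frac{|c_0|^2\,|\alpha(m,\phi_0)|^2}{\sum_{j=0}^{n^d-1}|c_j|^2\,|\alpha(m,\phi_j)|^2}.
\]
Replacing each exact $U^{2^t}$ by $\tilde U_{2^t}$ perturbs every amplitude by at most $\e_H$ through $\norm{U^t-\tilde U_t}\le\e_H$, so the numerator acquires an additive correction that I would carry through to obtain the $7\e_H/|c_0|$ term appearing below.

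Next I would fix $m_0$ to be the integer nearest $2^b\phi_0$, so that the standard estimates give $|\phi_0-m_0/2^b|\le 2^{-(b+1)}$ and $|\alpha(m_0,\phi_0)|^2\ge 4/\pi^2$ \cite{NC00}. Since the probability of outcome $m$ is $\sum_j|c_j|^2|\alpha(m,\phi_j)|^2\ge |c_0|^2|\alpha(m,\phi_0)|^2$, the probability of measuring $m_0$ is at least $|c_0|^2\cdot 4/\pi^2$; subtracting the cumulative simulation error $2\sum_{j=0}^{b-1}\norm{U^{2^j}-\tilde U_{2^j}}$ \cite[pg.~195]{NC00} then gives both the claimed success probability and the localization $|\phi_0-m_0/2^b|\le 2^{-(b+1)}$.

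The crux is bounding the off-diagonal amplitudes. For $m=m_0$ and any $j\ge1$, the triangle inequality with the gap hypothesis $|\phi_j-\phi_0|>5/2^b$ gives $|\phi_j-m_0/2^b|>9/2^{b+1}$; combining the elementary estimate $|\alpha(m,\phi)|\le\tfrac{1}{2^{b+1}|\phi-m/2^b|}$ with $|\alpha(m_0,\phi_0)|\ge 2/\pi$ then yields $|\alpha(m_0,\phi_j)|^2/|\alpha(m_0,\phi_0)|^2\le\pi^2/32$ for all $j\ge1$. Feeding this and $\sum_{j\ge1}|c_j|^2=1-|c_0|^2$ into the identity above produces the analogue of (\ref{eq:4mod}),
\[
|c_0'|>|c_0|\left(\frac{1}{\sqrt{|c_0|^2+\frac{\pi^2}{32}(1-|c_0|^2)}}-7\,\frac{\e_H}{|c_0|}\right).
\]

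From here the two conclusions follow exactly as in Theorem~\ref{thm1mod}. When $1-|c_0|^2\le\gamma\e_H$, monotonicity of $x\mapsto(x+\tfrac{\pi^2}{32}(1-x))^{-1/2}$ together with $|c_0|^2\le1$ gives $1-|c_0'|^2\le(\gamma+14)\e_H$; when $1-|c_0|^2\ge\gamma\e_H^{1-\eta}$, the same monotonicity with $|c_0|^2\ge\pi^2/16$ and $\tfrac{1}{\sqrt{1-a}}\ge\sqrt{1+a}$ forces the bracketed factor above $1$ once the $O(\e_H)$ term is discarded, so $|c_0'|\ge|c_0|$. I expect the main obstacle to be the faithful propagation of $\e_H$ into the amplitude identity: one must verify that swapping each $U^{2^t}$ for $\tilde U_{2^t}$ disturbs the numerator of $|c_0'|^2$ by no more than the $7\e_H$-type term, since otherwise the additive constant $14$ in the first case would degrade. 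The probability accounting and the two-case algebra are then routine.
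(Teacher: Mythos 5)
Your proposal is correct and follows essentially the same route as the paper's proof: the same amplitude decomposition into a signal part and an $\e_H$-bounded perturbation, the same localization of the outcome to the grid point nearest $2^b\phi_0$ with probability at least $|c_0|^2\cdot 4/\pi^2$ minus the cumulative simulation error, the same suppression of the off-diagonal amplitudes via the gap hypothesis $|\phi_j-\phi_0|>5/2^b$, and the same two-case algebra (the $7\e_H$ propagation you flag as the main obstacle is exactly what the paper's Lemmas \ref{Lem:1} and \ref{Lem:2} supply). The only cosmetic difference is that your ratio bound $\pi^2/32$ (your estimate actually gives $\pi^2/324$) differs from the paper's $\pi^2/256$, which is immaterial since both cases of the argument only use that this constant is less than $1$.
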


\begin{proof} After the application of $H^{\otimes b}$ on the top in phase estimation
 the state becomes
\[\frac{1}{2^{b/2}}\sum_{k=0}^{2^b -1 }\ket{k} 
\sum_{j=0}^{n^{d}-1}c_j\ket{u_j}\]
The state of the system after the application of the controlled $\tilde U_{2^t}$,
$t=0,1,\ldots ,b-1$ is 
\[\sum_{j=0}^{n^d-1}c_j \frac{1}{2^{b/2}} \sum_{k=0}^{2^b-1} \ket{k} 
\tilde{U}_k\ket{u_j}
= \frac{1}{2^{b/2}} \sum_{j=0}^{n^d-1}c_j \sum_{k=0}^{2^b-1} \ket{k} 
\left(U^k\ket{u_j} + D_k \ket{u_j}\right),\]
where $D_k = \tilde{U}_k - U^k$. Then $\norm{D_k} \leq \e_H$. 
Since 
$\ket{u_j}$, $j = 0,1,\ldots , n^d-1$,
are the eigenvectors of $U$, the state can be written as $\ket{\psi_1} + \ket{\psi_2}$, where
\[
\ket{\psi_1} = \frac{1}{2^{b/2}} \sum_{j=0}^{n^d-1}c_j \sum_{k=0}^{2^b-1} \ket{k} U^k\ket{u_j} = 
\frac{1}{2^{b/2}} \sum_{j=0}^{n^d-1}c_j \sum_{k=0}^{2^b-1} \ket{k} e^{2\pi i k\phi_j}\ket{u_j} ,
\]
and 
\[
\ket{\psi_2} =  \frac{1}{2^{b/2}} \sum_{j=0}^{n^d-1} c_j  
\sum_{k=0}^{2^b-1}\ket{k} D_k \ket{u_j} = 
\frac{1}{2^{b/2}} \sum_{j=0}^{n^d-1} c_j  \sum_{k=0}^{2^b-1}\ket{k} 
\ket{x_{j,k}} ,
\]
where 
$\ket{x_{j,k}} := D_k \ket{u_j}$. Clearly $\norm{\ket{x_{j,k}}}\leq \e_H$, 
for all $k = 0,1, \ldots ,2^b -1$ and $j = 0,1,\ldots , n^b-1$.

The next step in phase estimation is to apply 
$\mathbb{F}^{H}\otimes I$, where $\mathbb{F}^H$ is the inverse
Fourier transform.
The state becomes $\ket{\psi_{\mathbb{F}^{H}}} = 
\ket{\psi_{1,\mathbb{F}^{H}}} + \ket{\psi_{2,\mathbb{F}^{H}}}$, where
\begin{eqnarray*}
\ket{\psi_{1,\mathbb{F}^{H}}} 
&=& \frac{1}{2^{b/2}} \sum_{j=0}^{n^d-1}c_j \sum_{k=0}^{2^b-1} \mathbb{F}^{H}\ket{k}  e^{2\pi i k\phi_j} \ket{u_j} \\ 
&=& \frac{1}{2^b} \sum_{j=0}^{n^d-1}c_j \sum_{k,\ell=0}^{2^b-1} e^{2\pi i k\left(\phi_j - \frac{\ell}{2^b}\right)} \ket{\ell} \ket{u_j} \\ 
&=& \sum_{j=0}^{n^d-1}c_j \sum_{\ell=0}^{2^b-1} \alpha(\ell,\phi_j) \ket{\ell} \ket{u_j}
\end{eqnarray*}
where 
\begin{equation}\label{eq:alphas}
\alpha (\ell,\phi_j) := \frac{1}{2^b} \sum_{k=0}^{2^b-1} e^{2\pi i k\left(\phi_j - \frac{\ell}{2^b}\right)}
\end{equation}
and
\[
\ket{\psi_{2,\mathbb{F}^{H}}} = \sum_{j=0}^{n^d-1} c_j \frac{1}{2^{b/2}} 
\sum_{k=0}^{2^b-1}\mathbb{F}^{H}\ket{k} \ket{x_{j,k}} = 
\sum_{j=0}^{n^d-1} c_j \frac{1}{2^b} \sum_{k,\ell=0}^{2^b-1} e^{-2\pi i \ell k/2^b } 
\ket{\ell} \ket{x_{j,k}}.
\]

Finally we measure the top register on the computational basis, and denote the outcome by $m$. The resulting state is 
$\ket{m,\psi_m} = \frac{\ket{m,\psi_{1,m}} + \ket{m,\psi_{2,m}}}{\norm{\ket{m,\psi_{1,m}} + \ket{m,\psi_{2,m}}}}$ where
\[
\ket{m,\psi_{1,m}} = 
\frac{1}{2^b} \sum_{j=0}^{n^d-1}c_j \sum_{k=0}^{2^b-1} 
e^{2\pi i k\left(\phi_j - \frac{m}{2^b}\right)} \ket{m}\ket{u_j} = 
\sum_{j=0}^{n^d-1}c_j \alpha (m,\phi_j) \ket{m}\ket{u_j},
\]
and
\begin{equation} \label{eq:psi2}
\ket{m,\psi_{2,m}}  =
 \frac{1}{2^b} \sum_{j=0}^{n^d-1} c_j \sum_{k=0}^{2^b-1} e^{-2\pi i m k/2^b } \ket{m} \ket{x_{j,k}}
\end{equation}

We now consider the magnitude of the projection of the resulting state $\ket{m,\psi_m}$
on $\ket{m,u_0}$, namely $|c_0^{\prime}| = \left| \inner{m,u_0}{m,\psi_m}\right| =\left| \inner{u_0}{\psi_m} \right|$.
 
We have
\begin{eqnarray}
|  c_0^\prime|^2 = \frac{\left| c_0 \alpha (m,\phi_0) + \frac{1}{2^b} \sum_{j=0}^{n^d-1} c_j \sum_{k=0}^{2^b-1} e^{-2\pi i m k/2^b }  \inner{u_0}{x_{j,k}} \right|^2}{\norm{\ket{\psi_{1,m}} + \ket{\psi_{2,m}}}^2} 
\label{eq:1}
\end{eqnarray} 

If we did not have the matrix exponential approximation error, then
with probability at least $|c_0|^2\cdot 8/\pi^2$ the measurement yields an outcome $m$ such that
$\left|\phi_0 - \frac{m}{2^b}\right| < \frac{1}{2^{b}}$, see \cite[Thm. 11]{BHMT02}.
There are at most two such outcomes. The one that has the highest probability 
among them leads to an estimate that is closest to the phase.
From now on let $m$ denote the outcome for which $m/2^b$ is closest to the phase. Then its
probability is $|c_0|^2 \cdot |\alpha(m, \phi_0)|^2 \geq |c_0|^2 \cdot \frac{4}{\pi^2}$
and the error satisfies
$\left|\phi_0 - \frac{m}{2^b}\right| \leq \frac{1}{2^{b+1}}$ .

If we account for the error, the probability of $m$  becomes at least
$|c_0|^2 \cdot \frac{4}{\pi^2} - 2 \sum_{j=0}^{b-1} \norm{U^{2^j} - \tilde U_{2^j}}$; see
\cite[pg. 195]{NC00}.

From Lemma \ref{Lem:2} we have $\norm{\ket{\psi_{2,m}}} \leq \e_H$. 
In addition $\inner{m,\psi_{2,m}}{m,u_0} \leq \norm{\ket{\psi_{2,m}}}\cdot \norm{\ket{u_0}} \leq \e_H$. 
Using Lemma \ref{Lem:1} and for $\e_H < \sqrt 8 / \pi^2$, equation (\ref{eq:1}) becomes

\begin{eqnarray}
|c_0^{\prime}| &>& |c_0|  \left(\frac{|\alpha (m,\phi_0)|}{\sqrt{\sum_{j=0}^{n^d-1} |c_j|^2 \cdot |\alpha (m,\phi_j)|^2 }}- 7 \frac{\e_H}{|c_0|}\right) \nonumber \\
&=& |c_0| \left(\frac{1}{\sqrt{\sum_{j=0}^{n^d-1} |c_j|^2 \cdot \frac{|\alpha (m,\phi_j)|^2}{|\alpha (m, \phi_0)|^2} }}- 7 \frac{\e_H}{|c_0|} \right) \nonumber \\
&=& |c_0| \left(\frac{1}{\sqrt{|c_0|^2 + \sum_{j=1}^{n^d-1} |c_j|^2 \cdot \frac{|\alpha (m,\phi_j)|^2}{|\alpha (m, \phi_0)|^2} }}- 7 \frac{\e_H}{|c_0|} \right) 
 \label{eq:3}
\end{eqnarray}

Let $k$ be such that 
$| \alpha(m,\phi_k) |^2 = \max_{i\ge 1} |\alpha(m,\phi_i)|^2$. Then using the assumption $|\phi_k -\phi_0|>5/2^b$ 
we have that $|\phi_k - m/2^b |> 4/2^b$. From \cite[Thm. 11]{BHMT02} we get that
$|\alpha(m,\phi_k)|^2 \leq \frac{1}{(2\cdot 2^{b}\cdot 2^{-b+2})^2} = \frac{1}{64}$. 
Combine this with  $|\alpha(m,\phi_0)|^2 \geq \frac{8}{2\pi^2} = \frac{4}{\pi^2}$ to obtain
\begin{eqnarray}
|c_0^{\prime}| &>& |c_0| \left(\frac{1}{\sqrt{|c_0|^2 + \frac{\pi^2}{256}\cdot \sum_{j=1}^{n^d-1} |c_j|^2 }}- 7 \frac{\e_H}{|c_0|} \right) \nonumber \\
 &=& |c_0| \left(\frac{1}{\sqrt{|c_0|^2 + \frac{\pi^2}{256}\cdot (1 - |c_0|^2) }}- 7 \frac{\e_H}{|c_0|} \right) ,
 \label{eq:4}
\end{eqnarray}
since $\sum_{j=0}^{n^d-1}|c_j|^2 = 1$.

Now examine the different cases, depending on the
magnitude of $|c_0|$.

\textit{Case 1:} $1-|c_0|^2 \leq \gamma \e_H$, for a constant $\gamma$. 
Then (\ref{eq:4}) becomes
\[
|c_0^{\prime}| > |c_0| \left(1 - 7 \frac{\e_H}{\sqrt{1-\gamma \e_H}} \right) ,
\] 
because $f(x) = \frac{1}{\sqrt{x + \frac{\pi^2}{256} (1 - x)}}$ is a monotonically decreasing function
for $x \in [0,1]$. Hence
\begin{eqnarray*}
|c_0^{\prime}|^2 &>& |c_0|^2 \left(1 - \frac{14}{\sqrt{1-\gamma \e_H}} \e_H + \frac{49}{1-\gamma \e_H} \e_H^2\right) \\
&\geq& (1-\gamma \e_H) \cdot \left(1 - \frac{14}{\sqrt{1-\gamma \e_H}} \e_H + \frac{49}{1-\gamma \e_H} \e_H^2\right) \\
&=& 1 - \gamma \e_H - 14 \e_H \sqrt{1-\gamma \e_H} + 49 \e_H^2 \\
&\geq & 1 - \gamma \e_H - 14 \e_H + 49 \e_H^2 \geq 1 - (\gamma + 14) \e_H ,
\end{eqnarray*}
where the second from last inequality holds because $1- \gamma \e_H < 1$. This concludes the first part of the theorem.

\textit{Case 2:} $1-|c_0|^2 \geq \gamma \e_H^{1-\eta}$, for $\eta \in (0,1)$ and $\gamma >0$.
Then (\ref{eq:4}) becomes
\[
|c_0^{\prime}| > |c_0| \left(\frac{1}{\sqrt{1- \left(1 - \frac{\pi^2}{256}\right) \gamma \e_H^{1-\eta}}} - 7 \frac{\e_H}{\pi/ 4} \right),
\] 
because $f(x) = \frac{1}{\sqrt{x + \frac{\pi^2}{256} (1 - x)}}$ is a
monotonically decreasing function for $x \in [0,1-\gamma \e^{1-\eta}_H]$
and $|c_0|^2 \geq \frac{\pi^2}{16}$.

Note that $\frac{1}{\sqrt{1-a}} \geq \sqrt{ 1 + a }$, for $ |a| \leq 1$. Hence
\begin{eqnarray*}
|c_0^{\prime}|^2 &>& |c_0|^2 \left( 1 + \left(1 - \frac{\pi^2}{256}\right)\gamma \e_H^{1-\eta} - \frac{56}{\pi} \e_H \sqrt{1 + \left( 1 - \frac{\pi^2}{256}\right)\gamma \e_H^{1-\eta}} + \frac{28^2}{\pi^2} \e_H^2\right) \\
&>& |c_0|^2 \left(1 + \left(1 - \frac{\pi^2}{256}\right)\gamma \e_H^{1-\eta} - O(\e_H)\right)
> |c_0|^2,
\end{eqnarray*}
for $\e_H$ sufficiently small.
\end{proof}

\begin{lem} 
\label{Lem:1}
For $0 \leq \e_H < \frac{\sqrt 8}{\pi^2}$, $|c_0|^2\geq \frac{\pi^2}{16}$ 
and assuming the measurement outcome $m$ in phase estimation satisfies 
$\left|\frac{m}{2^b}-\phi_0\right|\leq 2^{-(b+1)}$
we have
\[
\frac{|\al(m,\phi_0)-\e_H|}{\sqrt{\sum_{j=0}^{n^d-1}|c_j|^2 |\al(m,\phi_j)|^2} + \e_H} > \frac{|\al(m,\phi_0)|}{\sqrt{\sum_{j=0}^{n^d-1}|c_j|^2 |\al(m,\phi_j)|^2}} - 7 \e_H ,
\]
where $c_0$ and $a(m,\phi_0)$ are defined in \eqref{cj} and \eqref{eq:alphas}, respectively.
\end{lem}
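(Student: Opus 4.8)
\emph{Proof proposal.} The plan is to reduce the stated inequality to a one-line algebraic identity together with crude magnitude bounds. Write $A := |\al(m,\phi_0)|$ and $S := \sqrt{\sum_{j=0}^{n^d-1}|c_j|^2|\al(m,\phi_j)|^2}$, so that the asserted inequality reads $\frac{|\al(m,\phi_0)-\e_H|}{S+\e_H} > \frac{A}{S} - 7\e_H$. First I would dispose of the absolute value in the numerator: since $\e_H$ is real and nonnegative, the reverse triangle inequality gives $|\al(m,\phi_0)-\e_H| \ge A - \e_H$, and this lower bound is positive because $A \ge 2/\pi$ (established below) while $\e_H < \sqrt{8}/\pi^2 < 2/\pi$. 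Hence it suffices to prove the sharper, purely real inequality
\[
\frac{A-\e_H}{S+\e_H} > \frac{A}{S} - 7\e_H .
\]

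The second step is an exact identity. Subtracting the two fractions,
\[
\frac{A}{S} - \frac{A-\e_H}{S+\e_H} = \frac{A(S+\e_H) - S(A-\e_H)}{S(S+\e_H)} = \frac{(A+S)\,\e_H}{S(S+\e_H)} ,
\]
so the displayed inequality is equivalent to $\frac{A+S}{S(S+\e_H)} < 7$. Everything now comes down to bounding the ratio $\frac{A+S}{S(S+\e_H)}$ by a constant strictly below $7$, uniformly over the admissible values of $A$, $S$ and $\e_H$.

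For this I would collect four elementary bounds. Since $\al(m,\phi_0)$ is an average of unit-modulus complex numbers, $A \le 1$; and since $|m/2^b - \phi_0| \le 2^{-(b+1)}$ places the argument in the upper half of the main lobe of the Dirichlet kernel, the standard phase-estimation estimate gives $A \ge 2/\pi$ (equivalently $|\al(m,\phi_0)|^2 \ge 4/\pi^2$, exactly as used in the proof of Proposition~\ref{thm1}). For $S$, retaining only the $j=0$ term yields $S^2 \ge |c_0|^2 A^2 \ge \frac{\pi^2}{16}\cdot\frac{4}{\pi^2} = \frac14$, so $S \ge \frac12$; and $\sum_j|c_j|^2 = 1$ together with $|\al|\le 1$ gives $S \le 1$. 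Dropping the harmless $\e_H$ in the denominator (using $S+\e_H \ge S$) and invoking $A \le 1$,
\[
\frac{A+S}{S(S+\e_H)} \le \frac{A+S}{S^2} \le \frac{1+S}{S^2},
\]
and since $S \mapsto (1+S)/S^2$ is decreasing on $[\tfrac12,1]$, its maximum over that interval is attained at $S=\tfrac12$, where it equals $6$. Thus the ratio is at most $6 < 7$, which is precisely what is needed.

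I do not expect a genuine obstacle: the argument is an exact identity followed by loose but sufficient magnitude estimates, and the constant $7$ is comfortably slack (the true bound being $6$). The only points requiring care are the main-lobe lower bound $A \ge 2/\pi$ and the accompanying positivity check $\e_H < A$, both guaranteed by the hypotheses $|m/2^b-\phi_0|\le 2^{-(b+1)}$ and $\e_H < \sqrt{8}/\pi^2$; these are exactly the conditions under which the lemma is invoked inside the proof of Proposition~\ref{thm1}.
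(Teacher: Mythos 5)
Your proof is correct and rests on exactly the same ingredients as the paper's: the reverse triangle inequality for the numerator and the lower bound $\sqrt{\sum_{j}|c_j|^2|\al(m,\phi_j)|^2}\ \ge\ |c_0|\,|\al(m,\phi_0)|\ \ge\ \frac{\pi}{4}\cdot\frac{2}{\pi}=\frac12$, with $|\al(m,\phi_0)|\ge 2/\pi$ coming from the main-lobe estimate. The only difference is bookkeeping: the paper perturbs the denominator and the numerator in two separate steps (costing $5\e_H$ and $2\e_H$ respectively, hence $7\e_H$), whereas your single exact identity $\frac{A}{S}-\frac{A-\e_H}{S+\e_H}=\frac{(A+S)\e_H}{S(S+\e_H)}$ gives the slightly sharper constant $6$, comfortably below the stated $7$.
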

\begin{proof}
We first show $\frac{|\al(m,\phi_0)|}{\sqrt{\sum_{j=0}^{n^d-1}|c_j|^2 |\al(m,\phi_j)|^2} +
 \e_H} > \frac{|\al(m,\phi_0)|}{\sqrt{\sum_{j=0}^{n^d-1}|c_j|^2 |\al(m,\phi_j)|^2}} - \gamma 
 \e_H$ for $\gamma > 4 $. 
We have
\begin{eqnarray*}
\frac{|\al(m,\phi_0)|}{\sqrt{\sum_{j=0}^{n^d-1}|c_j|^2 |\al(m,\phi_j)|^2} + \e_H} &>& \frac{|\al(m,\phi_0)|}{\sqrt{\sum_{j=0}^{n^d-1}|c_j|^2 |\al(m,\phi_j)|^2}} 
- \gamma \e_H \\
\Leftrightarrow \gamma &>& \frac{|\al(m,\phi_0)|}{\sqrt{\sum_{j=0}^{n^d-1}|c_j|^2 |\al(m,\phi_j)|^2} (\sqrt{\sum_{j=0}^{n^d-1}|c_j|^2 |\al(m,\phi_j)|^2} + \e_H) }
\end{eqnarray*} 

Then
$\sqrt{\sum_{j=0}^{n^d-1}|c_j|^2 |\al(m,\phi_j)|^2} \geq |\al(m,\phi_0)||c_0| \geq \frac{1}{2}$, 
since $\frac{\sqrt 4}{\pi}\leq |\al(m,\phi_0)| \leq 1$, because $\left|\frac{m}{2^b}-\phi_0\right|\leq 2^{-(b+1)}$ \cite{BHMT02}, and $|c_0|\geq \pi/4$.
Hence
\begin{eqnarray*}
\frac{|\al(m,\phi_0)-\e_H|}{\sqrt{\sum_{j=1}^{n^d-1}|c_j|^2 |\al(m,\phi_j)|^2} + \e_H} 
&>& \frac{|\al(m,\phi_0)|}{\sqrt{\sum_{j=1}^{n^d-1}|c_j|^2 |\al(m,\phi_j)|^2} + \e_H} - 
2 \e_H \\
&>& \frac{|\al(m,\phi_0)|}{\sqrt{\sum_{j=1}^{n^d-1}|c_j|^2 |\al(m,\phi_j)|^2}} - 
(\gamma + 2 ) \e_H ,
\end{eqnarray*}
for $\gamma > 4 $. Taking $\gamma = 5$  completes the proof.
\end{proof}

\begin{lem}
\label{Lem:2} Consider $\ket {m,\psi_{2,m}}$ as defined in \eqref{eq:psi2}. Then
$\norm{\ket{\psi_{2,m}}} \leq \e_H$.
\end{lem}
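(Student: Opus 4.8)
The plan is to exploit the linearity of the error operators $D_k = \tilde U_k - U^k$ in order to collapse the superposition over $j$ back into the normalized input state $\ket{\psi_{\rm in}}$, rather than bounding each term $\ket{x_{j,k}}$ individually. Concretely, I would first drop the (unit-norm) top-register label $\ket{m}$ and interchange the order of the two finite sums in \eqref{eq:psi2}, writing
\[
\ket{\psi_{2,m}} = \frac{1}{2^b} \sum_{k=0}^{2^b-1} e^{-2\pi i m k/2^b} \sum_{j=0}^{n^d-1} c_j \ket{x_{j,k}}.
\]
Since $\ket{x_{j,k}} = D_k \ket{u_j}$ and $D_k$ is a fixed linear operator for each $k$, the inner sum is exactly $\sum_j c_j D_k \ket{u_j} = D_k \sum_j c_j \ket{u_j} = D_k \ket{\psi_{\rm in}}$, using the expansion \eqref{cj}. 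This yields the compact form $\ket{\psi_{2,m}} = \frac{1}{2^b} \sum_{k=0}^{2^b-1} e^{-2\pi i m k/2^b}\, D_k \ket{\psi_{\rm in}}$.

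From here the bound is immediate by the triangle inequality: each phase factor has modulus one, and $\norm{D_k \ket{\psi_{\rm in}}} \le \norm{D_k}\cdot\norm{\ket{\psi_{\rm in}}} \le \e_H$, because $\norm{D_k} \le \e_H$ and $\ket{\psi_{\rm in}}$ is normalized. Summing the $2^b$ identical bounds against the $1/2^b$ prefactor gives $\norm{\ket{\psi_{2,m}}} \le \e_H$, as claimed.

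The only real subtlety, and hence the step I would flag, is the order in which the two sums are combined. Bounding $\norm{\ket{\psi_{2,m}}}$ term by term in $j$ before resumming would force a factor $\sum_j |c_j|$, which by Cauchy--Schwarz can be as large as $\sqrt{n^d}$ and would destroy the clean dimension-free estimate. Summing over $j$ first, so that the superposition reconstructs the unit-norm input state, is exactly what keeps the bound equal to $\e_H$; everything else is a routine application of the triangle inequality together with the per-step hypothesis $\norm{D_k}\le\e_H$.
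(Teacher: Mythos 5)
Your proof is correct and follows essentially the same route as the paper's: both interchange the sums so that the inner superposition $\sum_j c_j D_k\ket{u_j}$ collapses to $D_k\ket{\psi_{\rm in}}$, then apply the triangle inequality over $k$ together with $\norm{D_k}\le\e_H$ and the normalization of $\ket{\psi_{\rm in}}$. The subtlety you flag about summing over $j$ first is precisely the step the paper's argument relies on as well.
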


\begin{proof}
We have
\begin{eqnarray*}
\norm{\ket{\psi_{2,m}}}  &=&
 \norm{ \frac{1}{2^b} \sum_{j=0}^{n^d-1} c_j \sum_{k=0}^{2^b-1} e^{-2\pi i m k/2^b } \ket{m} \ket{x_{j,k}}} \\ 
 &\leq & \frac{1}{2^b} \sum_{k=0}^{2^b-1} \left|e^{-2\pi i m k/2^b }\right| \norm{\ket m } \cdot \norm{\sum_{j=0}^{n^d-1} c_j  \ket{x_{j,k}}} \\
 &=& \frac{1}{2^b} \sum_{k=0}^{2^b-1}  \norm{ D_k \sum_{j=0}^{n^d-1} c_j \ket{u_j}} \\
 &\leq & \frac{1}{2^b} \sum_{k=0}^{2^b-1}  \norm{ D_k } \cdot \norm{ \sum_{j=0}^{n^d-1} c_j \ket{u_j}} \\
 &=& \frac{1}{2^b} \sum_{k=0}^{2^b-1}  \norm{ D_k } \leq \e_H ,
\end{eqnarray*}
where $D_k=U_k-\tilde U_k$ and since $\norm{D_k} \leq \e_H$ due to (\ref{eq:expErr}), and $\norm{\sum_{j=0}^{n^d-1} c_j \ket{u_j}} =1$.
\end{proof}

\begin{lem} 
\label{lem:3}
Under the conditions of Theorem \ref{thm1mod}, i.e.
$m^\prime$ is the result of phase estimation for which 
$\left|\frac{m^\prime}{2^{t_0 + b}} - \phi_0\right| \leq \frac{1}{2^b}$,
$\left|\phi_j - \phi_0\right| > \frac{5}{2^{b}}$ 
for all $j=1,2,\ldots ,n^d-1$ and $|c_0|^2 \geq \pi^2 / 16$,
we have
\[
\frac{|\al(m^{\prime},\phi_j)|^2}{|\al(m^{\prime},\phi_0)|^2} \leq \frac{\pi^2}{32}
\]
with probability 
$p(t_0) \geq |c_0|^2\left( 1 -\frac 1{2(2^t_0-1)}\right)  - \left(\frac{5\pi^2}{2^5} + \frac{1-\frac{\pi^2}{16}}{2^5}\right) \cdot \frac{1}{2^{t_0}}$.
\end{lem}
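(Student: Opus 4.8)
The plan is to express the claimed probability as a sum of measurement-outcome probabilities and to split the ``good'' set $\mathcal{G}$ into the outcomes that satisfy the stated ratio bound and those that do not. Writing $N=2^{b+t_0}$, the probability of obtaining a top-register outcome $m'$ is $\sum_j |c_j|^2 |\al(m',\phi_j)|^2$, so if $\mathcal{G}'\subseteq\mathcal{G}$ denotes the outcomes of $\mathcal{G}$ for which $|\al(m',\phi_j)|^2/|\al(m',\phi_0)|^2\le\pi^2/32$ holds for all $j\ge1$, the target probability is $\sum_{m'\in\mathcal{G}'}\sum_j|c_j|^2|\al(m',\phi_j)|^2$. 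I would bound this from below by keeping only the ground-state contribution on $\mathcal{G}$ and then subtracting the probability mass carried by $\mathcal{G}\setminus\mathcal{G}'$.

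First I would recover the leading term. Since $\mathcal{G}=\{m' : |\phi_0-m'/N|\le 2^{t_0}/N\}$ collects the outcomes within $2^{t_0}$ of $N\phi_0$, the phase-estimation tail bound \cite[Thm.~11]{BHMT02} gives $\sum_{m'\in\mathcal{G}}|\al(m',\phi_0)|^2\ge 1-\frac{1}{2(2^{t_0}-1)}$, so the full probability of landing in $\mathcal{G}$ is at least $|c_0|^2\left(1-\frac{1}{2(2^{t_0}-1)}\right)$, the leading term of the claim.

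Next I would control the lost mass on $\mathcal{G}\setminus\mathcal{G}'$ by separating excited-state and ground-state contributions. For every $m'\in\mathcal{G}$ and every $j\ge1$ the gap hypothesis $|\phi_j-\phi_0|>5/2^b$ together with $|\phi_0-m'/N|\le 1/2^b$ forces $|\phi_j-m'/N|>4/2^b$, and the Dirichlet-kernel estimate $|\al(m',\phi)|\le\frac{1}{2N|\phi-m'/N|}$ then yields $|\al(m',\phi_j)|^2\le\frac{1}{64\cdot 2^{2t_0}}$ uniformly. Summing over the at most $2\cdot 2^{t_0}$ outcomes of $\mathcal{G}$, over $j\ge1$, and using $\sum_{j\ge1}|c_j|^2=1-|c_0|^2\le 1-\frac{\pi^2}{16}$, bounds the excited-state contribution by $\frac{1-\pi^2/16}{2^5}\cdot\frac{1}{2^{t_0}}$, which is exactly the second subtracted term. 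For the ground-state contribution I would note that, with the uniform bound above, the ratio can fail only when $|\al(m',\phi_0)|^2<\frac{32}{\pi^2}\cdot\frac{1}{64\cdot 2^{2t_0}}$, i.e.\ only for outcomes near the boundary of $\mathcal{G}$; writing $|\al(m',\phi_0)|^2\approx\frac{\sin^2(\pi\theta)}{\pi^2(r-\theta)^2}$, where $r$ is the integer offset of $m'$ from the outcome closest to $\phi_0$ and $\theta$ is the fractional part of $N\phi_0$, I would count these boundary outcomes and sum their amplitudes to produce the remaining subtracted term $\frac{5\pi^2}{2^5}\cdot\frac{1}{2^{t_0}}$.

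The main obstacle is precisely this ground-state part. In contrast with the single-outcome analysis of Proposition~\ref{thm1}, here $|\al(m',\phi_0)|^2$ degenerates to $0$ at the boundary of $\mathcal{G}$, so the uniform lower bound $|\al(m',\phi_0)|^2\ge 4/\pi^2$ is unavailable and one cannot simply discard a fixed constant fraction. Instead one must exploit that the near-boundary outcomes carry little probability---the numerator $\sin^2(\pi\theta)$ is common to all outcomes while the denominator grows with the offset $r$---so that the total discarded mass is only $O(2^{-t_0})$. Turning this heuristic into the precise count and kernel estimates, rather than the routine excited-state sum, is where the real work lies.
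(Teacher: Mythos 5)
Your decomposition is the same as the paper's: lower-bound the probability of landing in $\mathcal{G}$ by the ground-state contribution $|c_0|^2\bigl(1-\frac{1}{2(2^{t_0}-1)}\bigr)$ via the tail bound of \cite[Thm.~11]{BHMT02}, then subtract the probability mass of the subset of $\mathcal{G}$ on which the ratio bound fails, splitting that mass into an excited-state part and a ground-state part. Your excited-state estimate (at most $2\cdot 2^{t_0}$ outcomes in $\mathcal{G}$, each carrying at most $(1-\pi^2/16)\cdot 2^{-(2t_0+6)}$ of excited mass, using $\Delta_j\ge 4/2^b$ and the Dirichlet-kernel bound) is exactly the paper's and yields the term $\frac{1-\pi^2/16}{2^5}\cdot 2^{-t_0}$.

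However, the ground-state contribution of the bad outcomes---the term $\frac{5\pi^2}{2^5}\cdot 2^{-t_0}$---is precisely where the entire difficulty of the lemma lives, and you have not proved it; you say yourself that turning the heuristic into precise estimates ``is where the real work lies.'' The paper devotes essentially its whole proof to this step. Writing $m_\ell - 2^{b+t_0}\phi_0 = \ell + Y\cdot 2^{-q}$ with $0<Y\le 1$, it first disposes of the case $Y\cdot 2^{-q}\ge 1/4$ (where $\sin^2(2^{b+t_0}\phi_0\pi)\ge 1/2$ and no outcome in $\mathcal{G}$ is bad), checks separately that the closest outcome $m_0$ is never bad, characterizes the bad indices by the thresholds $(\ell+1)^2 2^{2q} > 4\cdot 2^{2t_0}$ (resp.\ $\ell^2 2^{2q}>4\cdot 2^{2t_0}$ for negative $\ell$), upper-bounds $|\al(m_\ell,\phi_0)|^2$ by $\frac{\pi^2}{2^{2q}\,\ell^2\, 2^3}$ (and the analogue with $\ell+1/4$ on the negative side), and then runs a four-way case analysis on the extremal bad indices $\ell_1,\ell_2$ with integral comparisons of $\sum 1/\ell^2$ to extract the constant $\frac{5\pi^2}{2^5}$. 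None of this is routine, and your remark that the bad outcomes lie ``near the boundary of $\mathcal{G}$'' is not quite right: when $2^{b+t_0}\phi_0$ is very close to an integer (large $q$), the threshold shows that almost every nonzero offset $\ell$ can be bad, and the argument must instead exploit that all such outcomes simultaneously have small amplitude. The skeleton of your argument is correct, but the lemma's actual quantitative content is missing.
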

\begin{proof}
Let $M_2 = 2^{t_0 + b}$ and $\Delta_j = \left| \frac{m^\prime}{M_2} - \phi_j\right|$, $i=0,\dots,n^d-1$. For $j\geq 1$ we have \cite{BHMT02} 
\begin{equation}
\label{eq:far}
|\al(m^{\prime},\phi_j)|^2 \leq \frac{1}{4 (M_2 \Delta_j)^2} \leq \frac{1}{2^{2t_0 + 6}}.
\end{equation}
Observe that $\Delta_j = \left| \frac{m^\prime}{M_2} - \phi_j\right| = 
\phi_j - \frac{m^\prime}{M_2} \geq \frac{4}{2^b}$, 
since $\phi_j > m^{\prime}/M_2$. Also 
\begin{equation}
\label{eq:al0}
\left|\al (m^\prime,\phi_0)\right|^2 = M_2^{-2}\cdot \frac{\sin^2 (M_2 \Delta_0 \pi)}{\sin^2 (\Delta_0 \pi)} \geq 
\frac{\sin^2 (M_2 \Delta_0 \pi)}{(M_2 \Delta_0 \pi)^2 } = \sinc^2 (M_2 \Delta_0 \pi) = 
\frac{\sin^2 (M_2 \phi_0 \pi)}{M_2^2 (\Delta_0 \pi)^2 } 
\end{equation}
where the inequality follows from $\Delta_0 \pi< \pi 2^{-b} \le \pi/2$. Hence,
\[
\frac{|\al(m^{\prime},\phi_j)|^2}{|\al(m^{\prime},\phi_0)|^2} \leq \frac{\pi^2}{4} \left(\frac{\Delta_0}{\Delta_j}\right)^2
\frac{1}{\sin^2 (M_2 \phi_0 \pi)}.
\]
Note that $\Delta_j \ge 4/2^b$ which yields
\begin{equation}
\label{eq:frac1}
\frac{|\al(m^{\prime},\phi_j)|^2}{|\al(m^{\prime},\phi_0)|^2} \le \frac{\pi^2}{2^6}\cdot \frac{1}{\sin^2 (M_2 \phi_0 \pi)}
\end{equation}

Note that the upper bound in (\ref{eq:frac1}) depends on how close 
$M_2 \phi_0$ is to an integer or, equivalently, on  is the
fractional part of $m^\prime - M_2 \phi_0$ for $m^\prime \in \mathcal{G}$.
We remark that  if $M_2\phi_0$ is an integer then $|\al(m^{\prime},\phi_0)|^2=1$  and the lemma statement holds trivially.

Consider, without loss of generality, the closest result $m_0$
to $M_2\phi_0$ such that $m_0 - M_2 \phi_0 = Y \cdot 2^{-q} \le 1/2$, $0<Y\le 1$, 
i.e.,  $m_0 > M_2 \phi_0$. (The case $m_0 < M_2\phi_0$ is dealt with similarly and we omit it.)
  
Denote by $m_\ell$, for $\ell = -2^{t_0}, -2^{t_0}+1, \ldots ,2^{t_0}-1$,
the measurement result such that $m_\ell - M_2\phi_0 = \ell + Y\cdot 2^{-q}$. These 
are all the elements of $\mathcal{G}$,  the set  defined in Theorem \ref{thm1mod}.

\textit{Case 1:} Let $1/2 \ge Y \cdot 2^{-q} \geq 1/4$. Then
\[
\sin^2 (M_2 \phi_0 \pi) = \sin^2( m_\ell\pi- M_2\phi_0\pi) 
= \sin^2(Y\cdot 2^{-q} \pi)
\geq \sin^2 (\pi/4) \geq 1/2,\]
which according to equation (\ref{eq:frac1}) implies
\begin{equation}
\label{eq:3.2}
\frac{|\al(m_\ell,\phi_j)|^2}{|\al(m_\ell,\phi_0)|^2} \leq \frac{\pi^2}{32},
\end{equation}
for all $m_\ell \in \mathcal{G}$.

\textit{Case 2:}
We now examine the case where $Y \cdot 2^{-q} < 1/4$, i.e. $q \geq 2$. 
In this case we deal with results $m_\ell$ in the set $\mathcal{G}$
for which the bound for $\frac{|\al(m_\ell,\phi_j)|^2}{|\al(m^{\prime},\phi_0)|^2}$
may become greater than $\frac{\pi^2}{32}$. We show that these results occur with 
probability at most $\left(\frac{5\pi^2}{2^5} + \frac{1-\frac{\pi^2}{16}}{2^5}\right)2^{-t_0}$.

Using  equation 
(\ref{eq:al0}) for $m_0$ we obtain
\[
\left|\al (m_0,\phi_0)\right|^2 \geq  \frac{\sin^2 ( Y\cdot 2^{-q} \pi)}{(Y \cdot 2^{-q} \pi)^2}
 = \sinc^2 (Y\cdot 2^{-q} \pi).
\]
Note that $Y\cdot 2^{-q} < 1/4$, hence $\sinc(\cdot)$ is decreasing. As a result
\[
\left|\al (m_0,\phi_0)\right|^2 \geq \sinc^2 (2^{-q} \pi)
\geq \left(\frac{2^{-q} \pi - \frac{(2^{-q} \pi)^3}{6}}{2^{-q} \pi}\right)^2
= \left(1 - \frac{\pi^2}{6 \cdot 2^{2q}}\right)^2,
\]
since $\sin(x) \geq x - \frac{x^3}{3!}$, for $x < 1$.
Furthermore, since $q\geq 2$ using equation (\ref{eq:far}) we get
\begin{equation}
\label{eq:m_0}
\frac{|\al(m_0,\phi_j)|^2}{|\al(m_0,\phi_0)|^2} \leq \frac{1}{4 \cdot 2^{2(t_0 + 2)}} \cdot \left(1 - \frac{\pi^2}{6 \cdot 2^{2q}}\right)^{-2}
<  \frac{1}{28} \cdot \left(1- \frac{\pi^2}{6\cdot 2^4}\right)< \frac{\pi^2}{32},
\end{equation}
since $t_0\geq 1$ and $q\geq 2$.

We now examine the remaining results $m_\ell$ for $\ell = \pm 1, \pm 2,\ldots , \pm (2^{t_0}-1), -2^{t_0}$.
From equation (\ref{eq:al0}) we have 
\[
\left|\al (m_\ell,\phi_0)\right|^2  \geq \sinc^2 ((\ell + Y \cdot 2^{-q}) \pi)
 = \frac{\sin^2 (Y \cdot 2^{-q} \pi)}{((\ell + Y \cdot 2^{-q}) \pi)^2}
 \geq \frac{2^{-2(q+1)} \cdot 8}{(\ell + Y \cdot 2^{-q})^2 \pi^2},
\]
since $1/4 > Y \cdot 2^{-q} \geq 2^{-(q+1)}$ and 
$\sin x \geq \frac{2\sqrt 2}{\pi} x$, for $x < \pi/4$.
From equation (\ref{eq:far}) we have
\[
\frac{|\al(m_\ell,\phi_j)|^2}{|\al(m_l,\phi_0)|^2} \leq \frac{\pi^2}{8} \cdot \frac{(\ell+Y \cdot 2^{-q})^2 \cdot 2^{2(q+1)}}{2^{2t_0+6}}
\]
which for $\ell \geq 1$ implies
\begin{equation}
\label{eq:upperplus}
\frac{|\al(m_\ell,\phi_j)|^2}{|\al(m_l,\phi_0)|^2} \leq \frac{\pi^2}{2^9} \cdot \frac{(\ell+1)^2}{2^{2t_0}} \cdot 2^{2(q+1)}  =: \beta (\ell,q,t_0)
\end{equation}
and for $\ell \leq -1$ implies
\begin{equation}
\label{eq:upperminus}
\frac{|\al(m_\ell,\phi_j)|^2}{|\al(m_l,\phi_0)|^2} \leq \frac{\pi^2}{2^9} \cdot 
\frac{\ell^2}{2^{2t_0}} \cdot 2^{2(q+1)}  =: \beta (\ell,q,t_0)
\end{equation}

If $q$ is relatively large, although $M_2 \phi_0$ is very close to $m_0$,  
we might have $\beta (\ell,q,t_0) > \pi^2/32$, for some results $m_\ell$. Let 
$\mathcal{B} = \{\ell \in \{-2^{t_0}, -2^{t_0}+1, \ldots , 2^{t_0}-1\}: 
\beta(\ell,q,t_0) > \pi^2/32\}$ 
the set of the indices of those results, with $\mathcal{B}_- = \{\ell \in \mathcal{B}:
\ell < 0\}$ and $\mathcal{B}_+ = \{\ell \in \mathcal{B}:
\ell > 0\}$.
In addition, let $\ell_1$ be the minimum element of the set $\mathcal{B}_+$ 
and $\ell_2$ be the maximum element of $\mathcal{B}_-$.

For any $\ell\in \mathcal{B}_+$ we have
\begin{equation}
\label{eq:badel1}
\frac{\pi^2}{2^9}\cdot \frac{(\ell+1)^2}{2^{2t_0}} \cdot 2^{2(q+1)} > \pi^2 / 32 
\Leftrightarrow  (\ell+1)^2 2^{2q} > \frac{2^7 \cdot 2^{2t_0}\pi^2}{2^5\pi^2} = 
4\cdot 2^{2t_0}.
\end{equation}
Similarly, for any $\ell \in \mathcal{B}_-$ we have
\begin{equation}
\label{eq:badel2}
\frac{\pi^2}{2^9}\cdot \frac{\ell^2}{2^{2t_0}} \cdot 2^{2(q+1)} > \pi^2 / 32 
\Leftrightarrow \ell^2 2^{2q} > \frac{2^7 \cdot 2^{2t_0}\pi^2}{2^5\pi^2} = 
4\cdot 2^{2t_0}.
\end{equation}
From \cite[Thm. 11]{BHMT02} and for $\ell \in \mathcal{B}_+$ we have
\begin{equation}
\label{eq:upperalphaplus}
|\al(m_\ell,\phi_0)|^2 = M_2^{-2} \cdot \frac{\sin^2 ((\ell + Y \cdot 2^{-p})\pi)}{\sin^2 \left(\frac{\ell
+ Y \cdot 2^{-q}}{M_2}\cdot \pi\right)}
\leq  \frac{(Y \cdot 2^{-q}\cdot \pi)^2}{\left(\frac{2\sqrt 2}{\pi} (\ell + Y\cdot 2^{-q})\pi\right)^2} \leq  \frac{\pi^2}{2^{2q}\cdot \ell^2\cdot 2^3}
\end{equation}
since $Y \cdot 2^{-q} < 1/4$. Similarly for $\ell \in \mathcal{B}_-$ we have
\begin{equation}
\label{eq:upperalphaminus}
|\al(m_\ell,\phi_0)|^2 \leq  \frac{(Y \cdot 2^{-q}\cdot \pi)^2}{\left(\frac{2\sqrt 2}{\pi} (\ell + Y\cdot 2^{-q})\pi\right)^2} \leq  \frac{\pi^2}{2^{2q}\cdot (\ell+1/4)^2\cdot 2^3}
\end{equation}
Let $P_1(\mathcal{B})$ the probability of getting a result $m_\ell \in \mathcal{B}$.
We have
\begin{eqnarray*}
P_1(\mathcal{B}) &= & \sum_{\ell \in \mathcal{B}} \sum_{j=0}^{n^d-1}|c_j|^2 |\al (m_\ell, \phi_j)|^2 \\
&= & \sum_{\ell \in \mathcal{B}_-} \sum_{j=0}^{n^d-1}|c_j|^2 |\al (m_\ell, \phi_j)|^2 + 
\sum_{\ell \in \mathcal{B}_+} \sum_{j=0}^{n^d-1}|c_j|^2 |\al (m_\ell, \phi_j)|^2
\end{eqnarray*}
We can write 
\[
\sum_{\ell \in \mathcal{B}_-} \sum_{j=0}^{n^d-1}|c_j|^2 |\al (m_\ell, \phi_j)|^2 =
\sum_{\ell \in \mathcal{B}_-} \left( |c_0|^2 |\al(m_\ell, \phi_0)|^2 + \sum_{j=1}^{n^d-1} 
|c_j|^2 |\al (m_\ell , \phi_j)|^2 \right)
\]
and
\[
\sum_{\ell \in \mathcal{B}_+} \sum_{j=0}^{n^d-1}|c_j|^2 |\al (m_\ell, \phi_j)|^2 =
\sum_{\ell \in \mathcal{B}_+} \left( |c_0|^2 |\al(m_\ell, \phi_0)|^2 + \sum_{j=1}^{n^d-1} 
|c_j|^2 |\al (m_\ell , \phi_j)|^2 \right)
\]
Note that $\sum_{j=1}^{n^d-1} |c_j|^2 = 1 - |c_0|^2 \leq 1-\frac{\pi^2}{16}$ according to the 
Lemma's assumptions, 
and $|\al(m_\ell , \phi_j)|^2 \leq 2^{-(2t_0 + 6)}$ from (\ref{eq:far}). Using the
bound from (\ref{eq:upperalphaminus}) we have
\begin{eqnarray*}
\sum_{\ell \in \mathcal{B}_-} \sum_{j=0}^{n^d-1}|c_j|^2 |\al (m_\ell, \phi_j)|^2 &\leq & \sum_{\ell \in \mathcal{B}_-} \left(|\al (m_\ell, \phi_0)|^2 
+ \frac{1-\pi^2/16}{2^{2t_0 + 6}}\right) \\ 
&\leq & 
\frac{\pi^2}{2^{2q +3}} \sum_{\ell = -2^{t_0}}^{\ell_2} \frac{1}{(\ell + 1/4)^2} + (\ell_2 + 2^{t_0}+1)
\left(1-\frac{\pi^2}{16}\right) \frac{1}{2^{2t_0+6}} \\
&\leq & \frac{\pi^2}{2^{2q +3}} \sum_{\ell = -2^{t_0}}^{\ell_2} \frac{1}{(\ell + 1/4)^2} + 
\left(1-\frac{\pi^2}{16}\right) \frac{1}{2^{t_0+6}}.
\end{eqnarray*}
We now take cases in order to calculate $\sum_{\ell = -2^{t_0}}^{\ell_2} \frac{1}{(\ell + 1/4)^2}$ depending on the value of $\ell_2$.

\textit{Case 2.1} Let $\ell_2 = -1$. Then 
\[
\sum_{\ell = -2^{t_0}}^{\ell_2} \frac{1}{(\ell + 1/4)^2} = \frac{2^4}{3^2} + \int_{-2^{t_0}}^{-1} 
\frac{1}{(x+1/4)^2} dx \leq \frac{2^4}{3^2} + \frac{4}{3} = \frac{28}{9}.
\]
As a result, 
\begin{equation}
\label{eq:prop_l2=-1}
\sum_{\ell \in \mathcal{B}_-} \sum_{j=0}^{n^d-1}|c_j|^2 |\al (m_\ell, \phi_j)|^2
\leq \frac{7}{18}\cdot \frac{\pi^2}{2^{2q}} + \left(1-\frac{\pi^2}{16}\right) \frac{1}{2^{t_0 + 6}}.
\end{equation}

\textit{Case 2.2} Let $\ell_2 < -1$. Then 
\[
\sum_{\ell = -2^{t_0}}^{\ell_2} \frac{1}{(\ell + 1/4)^2} \leq \int_{-2^{t_0}}^{\ell_2 + 1} 
\frac{1}{(x+1/4)^2} dx \leq \frac{1}{-\ell_2 - 5/4}.
\]
As a result, 
\begin{equation}
\label{eq:prop_l2<-1}
\sum_{\ell \in \mathcal{B}_-} \sum_{j=0}^{n^d-1}|c_j|^2 |\al (m_\ell, \phi_j)|^2
\leq \frac{\pi^2}{2^{2q+3}} \cdot \frac{1}{-\ell_2 - 5/4} +
\left(1-\frac{\pi^2}{16}\right) \frac{1}{2^{t_0 + 6}}.
\end{equation}

Similarly we examine the probability of the results $m_{\ell}$
for $\ell \in \mathcal{B}_+$.
Using the
bound from (\ref{eq:upperalphaplus}) we have
\begin{eqnarray*}
\sum_{\ell \in \mathcal{B}_+} \sum_{j=0}^{n^d-1}|c_j|^2 |\al (m_\ell, \phi_j)|^2 
&\leq & \sum_{\ell \in \mathcal{B}_+} \left(|\al (m_\ell, \phi_0)|^2 
+ \frac{1-\pi^2/16}{2^{2t_0 + 6}}\right) \\ 
&\leq & 
\frac{\pi^2}{2^{2q +3}} \sum_{\ell = \ell_1}^{2^{t_0}-1} \frac{1}{\ell^2} + (2^{t_0}-\ell_1 + 1)
\left(1-\frac{\pi^2}{16}\right) \frac{1}{2^{2t_0+6}} \\
&\leq & \frac{\pi^2}{2^{2q +3}} \sum_{\ell = \ell_1}^{2^{t_0}-1} \frac{1}{\ell^2} + 
\left(1-\frac{\pi^2}{16}\right) \frac{1}{2^{t_0+6}}.
\end{eqnarray*}
We now consider different values of $\ell_1$, to calculate $\sum_{\ell = \ell_1}^{2^{t_0}-1} 
\frac{1}{\ell^2}$.

\textit{Case 2.3} Let $\ell_1 = 1$. Then 
\[
\sum_{\ell = \ell_1}^{2^{t_0}-1} 
\frac{1}{\ell^2} = 1 + \int_{1}^{2^{t_0}-1} 
\frac{1}{x^2} dx \leq 2.
\]
As a result, 
\begin{equation}
\label{eq:prop_l1=1}
\sum_{\ell \in \mathcal{B}_+} \sum_{j=0}^{n^d-1}|c_j|^2 |\al (m_\ell, \phi_j)|^2
\leq \frac{\pi^2}{2^{2q+2}} + \left(1-\frac{\pi^2}{16}\right) \frac{1}{2^{t_0 + 6}}.
\end{equation}

\textit{Case 2.4} Let $\ell_1 > 1$. Then 
\[
\sum_{\ell = \ell_1}^{2^{t_0}-1} \frac{1}{\ell^2} \leq \int_{\ell_1-1}^{2^{t_0}-1} 
\frac{1}{x^2} dx \leq \frac{1}{\ell_1 - 1}.
\]
As a result, 
\begin{equation}
\label{eq:prop_l1>1}
\sum_{\ell \in \mathcal{B}_-} \sum_{j=0}^{n^d-1}|c_j|^2 |\al (m_\ell, \phi_j)|^2
\leq \frac{\pi^2}{2^{2q+3}} \cdot \frac{1}{\ell_1 - 1} +
\left(1-\frac{\pi^2}{16}\right) \frac{1}{2^{t_0 + 6}}.
\end{equation}

Let $\ell_1 = 1, \ell_2 = -1$. Then from (\ref{eq:prop_l2=-1}),(\ref{eq:prop_l1=1})
\[P_1 (\mathcal{B}) \leq \left( \frac{7}{18} + \frac{1}{4} \right) \frac{\pi^2}{2^{2q}}
+ \frac{1-\pi^2/16}{2^5} \cdot \frac{1}{2^{t_0}}.\]
From (\ref{eq:badel1}),(\ref{eq:badel2}) we have $2^{2q} > 2^{2t_0 + 2}$ 
and $2^{2q} > 2^{2t_0}$. Hence
\begin{equation}
\label{eq:l1=1,l2=-1}
P_1 (\mathcal{B}) \leq \left( \frac{7}{18} + \frac{1}{4} \right) \frac{\pi^2}{2^{2t_0 + 2}}
+ \frac{1-\pi^2/16}{2^5} \cdot \frac{1}{2^{t_0}} \leq \frac{1-\pi^2 / 16}{2^4} 2^{-t_0},
\end{equation} 
for $t_0$ sufficiently large.

Let $\ell_1 = 1, \ell_2 < -1$. From (\ref{eq:prop_l2<-1}),(\ref{eq:prop_l1=1})
\[P_1 (\mathcal{B}) \leq \frac{\pi^2}{2^{2q+3}} \cdot \frac{1}{-\ell_2 - 5/4} + 
\frac{\pi^2}{2^{2q+2}} + \frac{1-\pi^2/16}{2^5} \cdot \frac{1}{2^{t_0}}.\]
From (\ref{eq:badel1}),(\ref{eq:badel2}) we have $2^{2q} > 2^{2t_0}$ 
and $2^{2q} > \frac{2^{2t_0+2}}{\ell_2^2}$. Hence
\begin{eqnarray}
\label{eq:l1=1,l2<-1}
P_1 (\mathcal{B}) &\leq &  \frac{\pi^2}{2^3} \cdot \frac{\ell_2^2}{(-\ell_2 - 5/4)^2} \cdot
2^{-(2t_0 + 2)}
+ \frac{1-\pi^2/16}{2^5} \cdot \frac{1}{2^{t_0}} + \frac{\pi^2}{2^{2t_0+2}} \nonumber \\
&\leq & \left(\frac{\pi^2}{2^3} + \frac{1-\pi^2/16}{2^5}\right) 2^{-t_0},
\end{eqnarray} 
since $ \frac{\ell_2^2}{-\ell_2 - 5/4} \leq 2^{t_0 + 1}$ for $t_0\ge 2$.

Let $\ell_1 > 1, \ell_2 = -1$. From (\ref{eq:prop_l2=-1}),(\ref{eq:prop_l1>1})
\[P_1 (\mathcal{B}) \leq \frac{7}{18} \cdot \frac{\pi^2}{2^{2q}} 
+ \frac{\pi^2}{2^{2q+3}}\cdot \frac{1}{\ell_1-1}
+ \frac{1-\pi^2/16}{2^5} \cdot \frac{1}{2^{t_0}}.\]
From (\ref{eq:badel1}),(\ref{eq:badel2}) we have $2^{2q} > 2^{2t_0+2}$ 
and $2^{2q} > \frac{2^{2t_0+2}}{(\ell_1^2+1)^2}$. Hence
\begin{eqnarray}
\label{eq:l1>1,l2=-1}
P_1 (\mathcal{B}) &\leq &  \frac{7}{18} \cdot \frac{\pi^2}{2^{2t_0+2}}
+ \frac{1-\pi^2/16}{2^5} \cdot \frac{1}{2^{t_0}} + \frac{\pi^2}{2^{2t_0+2}}
+ \frac{\pi^2}{2^{2t_0 + 5}} \cdot \frac{(\ell_1 + 1)^2}{\ell_1 - 1} \nonumber \\
&\leq & \left(\frac{\pi^2}{2^3} + \frac{1-\pi^2 / 16}{2^5}  \right) 2^{-t_0},
\end{eqnarray} 
since $ \frac{(\ell_1+1)^2}{\ell_1 - 1} \leq 3 \cdot 2^{t_0}$ for $t_0\ge 1$.

Let  $\ell_1 > 1, \ell_2 < -1$. From (\ref{eq:prop_l2<-1}),(\ref{eq:prop_l1>1})
\[P_1 (\mathcal{B}) \leq 
\frac{\pi^2}{2^{2q+3}}\cdot \frac{1}{-\ell_2-5/4}
+ \frac{1-\pi^2/16}{2^5} \cdot \frac{1}{2^{t_0}} + \frac{\pi^2}{2^{2q+3}}\cdot \frac{1}{\ell_1 -1}.\]
From (\ref{eq:badel1}),(\ref{eq:badel2}) we have $2^{2q} > \frac{2^{2t_0+2}}{(\ell_1 + 1)^2}$ 
and $2^{2q} > \frac{2^{2t_0+2}}{\ell_2^2}$. Hence
\begin{eqnarray}
\label{eq:l1>1,l2<-1}
P_1 (\mathcal{B}) &\leq &  \frac{\pi^2}{2^{2t_0+5}} \cdot \frac{(\ell_1+1)^2}{\ell_1-1}
+ \frac{1-\pi^2/16}{2^5} \cdot \frac{1}{2^{t_0}} + 
\frac{\pi^2}{2^{2t_0+5}}\cdot \frac{\ell^2}{-\ell_2-5/4} \nonumber \\
&\leq & \left(\frac{5\pi^2}{2^5} + \frac{1-\pi^2/16}{2^5} \right) 2^{-t_0}.
\end{eqnarray} 

Finally, combining the results from 
(\ref{eq:l1=1,l2=-1}),(\ref{eq:l1=1,l2<-1}),(\ref{eq:l1>1,l2=-1}) and (\ref{eq:l1>1,l2<-1}) we have
\[
P_1(\mathcal{B}) \leq \left(\frac{5\pi^2}{2^5} + \frac{1-\pi^2/16}{2^5} \right) 2^{-t_0}.
\] 
\end{proof}


\end{document}